\documentclass[preprint,12pt]{elsarticle}
\usepackage{graphics,graphicx}
\usepackage{url}
\usepackage{subfigure}
\usepackage{stmaryrd}
\usepackage{wrapfig}
\usepackage{mathrsfs}
\usepackage{amsfonts,amsmath,amssymb,amsthm}
\usepackage{lineno,hyperref}
\modulolinenumbers[5]
\newtheorem{definition}{Definition}
\newtheorem{theorem}{Theorem}
\newtheorem{corollary}{Corollary}
\newtheorem{remark}{Remark}

\newtheorem{example}{Example}
\graphicspath{{./figure/}}


\addtolength{\textwidth}{2.9cm} \addtolength{\hoffset}{-1.5cm}








\bibliographystyle{elsarticle-num}

\begin{document}

\begin{frontmatter}

\title{Curve and surface construction based on the generalized toric-Bernstein basis
functions}

\author{Jing-Gai Li}
\ead{LiJingGaiM@mail.dlut.edu.cn}


\author[]{Chun-Gang Zhu\corref{mycorrespondingauthor}}
\cortext[mycorrespondingauthor]{Corresponding author}
\ead{cgzhu@dlut.edu.cn}

\address{School of Mathematical Sciences, Dalian
University of Technology, Dalian 116024, China.}

\begin{abstract}
The construction of parametric curve and surface plays important role in computer aided geometric design (CAGD), computer aided design (CAD), and geometric modeling. In this paper, we define a new  kind of blending functions associated with a real points set, called generalized toric-Bernstein (GT-Bernstein) basis functions. Then the generalized toric-Bezier (GT-B\'ezier) curves and surfaces  are constructed based on the GT-Bernstein basis functions, which are the projections of the (irrational) toric varieties in fact and the generalizations of the classical rational B\'ezier curves  and surfaces and toric surface patches. Furthermore, we also study the properties of the presented curves and surfaces, including the limiting properties of weights and knots. Some representative  examples verify the properties and results.
\end{abstract}

\begin{keyword}
Curve and surface design\sep B\'ezier curve and surface\sep Basis functions\sep Bernstein basis functions\sep Toric surface patches

\MSC[2010]65D17 \sep 68U07 \sep 41A20
\end{keyword}

\end{frontmatter}


\section{Introduction}
\label{sec:1}
Representing curves and surfaces of geometric shapes is an essential but challenging task of Computer Aided Geometric Design (CAGD) and Computed Aided Design (CAD)~(\cite{ref1,ref2}). Curves and surfaces are commonly represented by the parametric method and the implicit method.  To be specifically speaking,  the parametric method is characterized by its advantages of easy plotting, generalization and splicing.  Parametric curves and surfaces have gone experienced the developments of polynomial parametric curve (eg. Ferguson curves), B\'ezier method, B-spline method and NURBS method~(\cite{ref1,ref2}). In this development process, the construction of basis functions(also called blending functions) of a curve and a surface plays a key role. The history of parametric curve and surface is essentially the history of the basis functions, which has gone through the polynomial power basis, Bernstein basis, B-spline basis, and their related rational and extended forms. Therefore, the basis functions are the core of curve and surface construction. A set of basis functions with good properties inevitably makes the parametric curve and surface possessing powerful vitality and application value.

In 1912, S.N. Bernstein~\cite{ref3} constructed a set of polynomial sequences to prove the Weierstrass approximation theorem. The key of Bernstein's proof is the construction of the basis functions, which are called Bernstein basis functions. In 1959, de Casteljau~\cite{ref4} firstly applied the Bernstein basis functions to the curve representation and then P. B\'ezier used them for geometric shape representation,  namely B\'ezier curve. The B\'ezier curve uses the control polygon to represent the curve, which is a major breakthrough in the curve representation method in CAGD and CAD. The B\'ezier method develops rapidly and becomes the core method of geometric shape representation because of its good geometric properties and algorithms. It is worth mentioning that the Bernstein basis functions and their generalizations are widely applied in various fields with its good properties, and thus become an important factor affecting mathematics, applied mathematics and related subjects in the 20th century~\cite{ref5}.

In recent years, the parametric curves and surfaces construction based on different basis functions have been studied by many scholars. Chen and Wang et al.~\cite{ref6,ref7} defined the C-B\'ezier curve and the C-B spline curve (NUAT B-spline curve) by extending the space of mixed algebra and trigonometric polynomial. The C-B\'ezier curve and C-B spline curve introduce shape parameters and adjust the shape of the curve by controlling the changes of control points and parameters. Oru\c{c} and Phillips~\cite{ref8} defined a $q$-B\'ezier curve based on the $q$-Bernstein operator which was constructed by Phillips~\cite{ref01}. They studied the rational $q$-B\'ezier curve and the tensor product $q$-B\'ezier surface, proved the properties of the curve and surface and gave the corresponding subdivision form further. Han et al.~\cite{ref9} constructed a new generalization of B\'ezier curves  and  the corresponding tensor product surfaces over the rectangular domain .These curves and surfaces are based on the Lupa\c{s} $q$-analogue of Bernstein operator. Compared with $q$-B\'ezier curves and surfaces based on Phillips $q$-Bernstein polynomials, these curves and surfaces show more flexibility in choosing the value of $q$ and superiority in shape control of curves and surfaces.  Schaback~\cite{ref10} gave an introduction to certain techniques for the construction of surfaces from scattered data, which emphasis is putting on interpolation methods using compactly supported radial basis functions. Goldman and Simeonov~\cite{ref11} studied the properties of quantum Bernstein bases and quantum B\'ezier curves  by introducing a new variant of the blossom.

In 2002, Krasauskas~\cite{ref12} proposed a new multisided surface modeling method namely toric surface based on the toric ideals and toric varieties defined by a given set of integer lattice points. The basis functions constructing the toric surface are called the toric-Bernstein basis functions (also called the toric-B\'ezier basis functions). The toric surface is degenerated into rational B\'ezier curve if the set of lattice points are constrained to a one-dimensional integer points, and the tensor product and the triangular B\'ezier surfaces are also special forms of the toric surface.  Garc\'ia-Puente et al.~\cite{ref13} explained the geometric significance of the weights of toric surfaces, which is called the toric degeneration property.

Most of the basis functions constructing curves and surfaces defined above are represented in non-negative integer power forms. At present, some researchers have put many efforts on the construction of curves and surfaces based on basis functions with rational or irrational number powers . The multiquadric(MQ) function is a radial basis function (RBF) with the rational number power form which is widely used in numerical analysis and scientific computing \cite{ref10}. In 2015, Zhu et al.~\cite{ref14} extended the Bernstein basis functions and then constructed $\alpha\beta$-Bernstein-like basis with two exponential shape parameters $\alpha$ and $\beta$ with real number degrees.

Garc\'ia-Puente and Sottile~\cite{ref19} showed that tuning a pentagonal toric patch by lattice points $\widetilde{\mathcal {A}}$ (see Fig.~\ref{fig1:subfig:b}) instead of $\mathcal {A}$ (see Fig.~\ref{fig1:subfig:a}) to achieve linear precision, where $\widetilde{\mathcal {A}}$ contains three non-integer points. In 2008, Craciun et al.~\cite{ref15} studied the theory of toric varieties defined by generally real lattice sets, which are applied in algebraic statistics known as toric model~\cite{ref02} and studied the geometric properties of toric surfaces. In 2015, Postinghel et al.~\cite{ref16} presented the degenerations of real irrational toric varieties defined by generally real number set. Li et al. studied the T-B\'ezier curve constructed by the real points preliminary in \cite{li2018}.

\begin{figure}[h!]
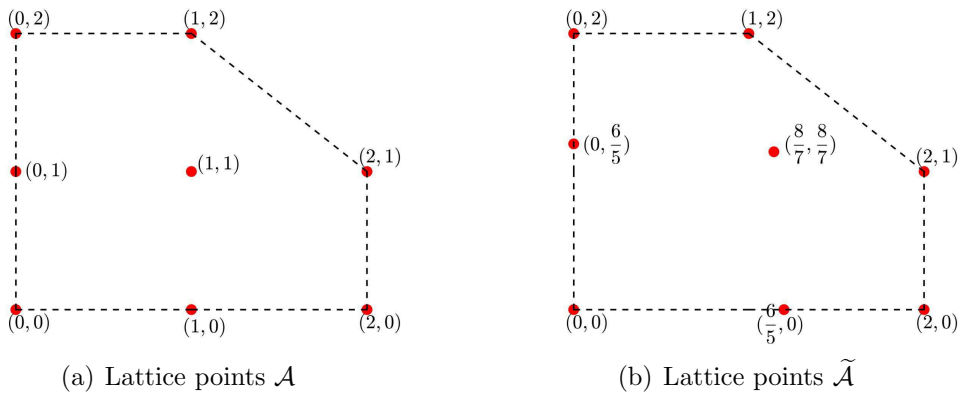

  \centering
  \subfigure[Lattice points $\mathcal {A}$]{
   \label{fig1:subfig:a} 
    \includegraphics[width=6cm]{tu101.eps}}
   \hspace{1cm}
  \subfigure[Lattice points $\widetilde{\mathcal {A}}$]{
    \label{fig1:subfig:b} 
    \includegraphics[width=6cm]{tu110.eps}}
  \caption{Lattice points $\mathcal {A}$ and $\widetilde{\mathcal {A}}$}
  \label{fig:1}
\end{figure}

In this paper, inspired by above methods, especially by \cite{ref16}, we present a kind of generalized toric-Bernstein basis functions of real number powers for any given real number set, and then construct a new kind of parametric curve and multisided surface based on the generalized toric-Bernstein basis functions. The properties of presented curves and surfaces are also studied.

The rest of this paper is organized as follows. In Section 2, the generalized toric-Bernstein basis functions are defined and the properties of the basis are discussed. Then, a class of generalized B\'ezier curve is constructed in Section 3, which is the generalization of the classical rational B\'ezier curve. In Section 4, we construct a new kind of generalized toric surface by bivariate generalized toric-Bernstein basis functions. At last, we conclude the whole paper and point out the future work in Section 5.
\section{Generalized Toric-Bernstein Basis Functions}
\label{sec:2}
It is well known that toric Bernstein basis functions depend on the finite set of integer lattice points $\mathcal {A}$ and boundary functions of the convex hull(lattice polygon) $\Delta_{\mathcal {A}}$ of $\mathcal {A}$. When the lattice polygon $\Delta_{\mathcal {A}}$ is a standard triangle or a rectangle, if we take appropriate coefficients, the toric Bernstein basis functions degenerate into the classical Bernstein basis functions after the parameter transformation. So they are the generalizations of the classical Bernstein basis functions. In this section, we generalize the toric Bernstein basis functions to finite set of real points, and give the definitions of generalized toric-Bernstein basis functions in one and two dimensions.

\subsection{Univariate generalized toric-Bernstein basis functions}
\label{subsec:21}
Consider $\mathcal {A}=\{a_{0},a_{1},\cdots,a_{n}\}\subset \mathbb{R}$ with $a_{0}\leq a_{1}\leq \cdots\leq a_{n-1}\leq a_{n}$, and $\Delta_{\mathcal {A}}=[a_{0},a_{n}]$. Obviously, the endpoints of $\Delta_{\mathcal {A}}$ are points $a_{0}$ and $a_{n}$ and we assume $a_{0}<a_{n}$. Set $h_{0}(t)=k_{0}(t-a_{0})$ and $h_{1}(t)=k_{1}(a_{n}-t)$, where $k_{0},k_{1}$ are positive real numbers such that $h_{0}(t)\geq0,h_{1}(t)\geq0,t\in \Delta_{\mathcal {A}}$. Then basis functions indexed by $\mathcal {A}$ can be constructed as follow.

\begin{definition}\label{def:201}
Let $\mathcal {A}=\{a_{0},a_{1},\cdots,a_{n}\}\subset \mathbb{R}$ with $a_{0}\leq a_{1}\leq \cdots\leq a_{n-1}\leq a_{n}$ and $a_{0}<a_{n}$. Then, for any point $a_{i}$ in $\mathcal {A}$, we define the generalized toric-Bernstein (GT-Bernstein) basis functions as
\begin{equation}\label{equ:201}
  \beta_{a_{i}}(t)=c_{a_{i}}h_{0}(t)^{h_{0}(a_{i})}h_{1}(t)^{h_{1}(a_{i})},~~t\in\Delta_{\mathcal {A}},
\end{equation}
where coefficient $c_{a_{i}}>0$ and $a_{i}$ is called knot.
\end{definition}

The rational form of the GT-Bernstein basis $\beta_{a_{i}}(t)$ is
\begin{equation}\label{equ:202}
  T_{a_{i}}(t)=\frac{\omega_{a_{i}}\beta_{a_{i}}(t)}{\sum^{n}_{i=0}\omega_{a_{i}}\beta_{a_{i}}(t)},~~~~t\in\Delta_{\mathcal {A}},
\end{equation}
where $\omega_{a_{i}} > 0$ is called weight.

\begin{remark}\label{rem:201}
The GT-Bernstein basis $\{\beta_{a_{i}}(t)\}$ defined by equation (\ref{equ:201}) depends on the selection of the coefficients $k_{0}$ and $k_{1}$. Since any positive real numbers can be selected, if there is no special explanation, we set $k_{0}=k_{1}=1$. We will show that the curve defined by $\{\beta_{a_{i}}(t)\}$ is independent of $k_{0}$ and $k_{1}$ in Section 3.
\end{remark}

It can be seen that the GT-Bernstein basis can be degenerated into the classical univariate Bernstein basis by simple transformation if $a_{i}=i,c_{a_{i}}=\frac{1}{n^{n}}\binom{n}{i}(i=0,1,\cdots,n)$. For $a_{i}=\frac{i}{n}(i=0,1,\cdots,n)$, let $k_{0}=k_{1}=n$ and select coefficients properly, then the GT-Bernstein basis degenerates into the univariate Bernstein basis too. For $\mathcal {A}=\{a_{0},a_{1},\cdots,a_{n}\}\subset\mathbb{Z}$, the basis functions defined by (\ref{equ:201}) degenerate into the toric-Bernstein basis functions defined in \cite{ref12}. Therefore, the GT-Bernstein basis is the generalization of Bernstein basis and toric-Bernstein basis.

\begin{example}\label{exa:201}
Let $a_{0}=0,a_{1}=\frac{\sqrt{2}}{4},a_{2}=\frac{1}{2},a_{3}=\frac{\sqrt{2}}{2},a_{4}=1$ and $c_{a_{0}}=\frac{1}{2},c_{a_{1}}=1,c_{a_{2}}=\frac{3}{2},c_{a_{3}}=\frac{7}{10},c_{a_{4}}=\frac{9}{10}$. By (\ref{equ:201}), we have
\begin{equation*}
  \beta_{a_{0}}(t)=\frac{1}{2}(1-t),~\beta_{a_{1}}(t)=t^{\frac{\sqrt{2}}{4}}(1-t)^{1-\frac{\sqrt{2}}{4}},
\end{equation*}
\begin{equation*}
  \beta_{a_{2}}(t)=\frac{3}{2}t^{\frac{1}{2}}(1-t)^{\frac{1}{2}},~\beta_{a_{3}}(t)=\frac{7}{10}t^{\frac{\sqrt{2}}{2}}(1-t)^{1-\frac{\sqrt{2}}{2}},~\beta_{a_{4}}(t)=\frac{9}{10}t,
\end{equation*}
and the basis functions $\beta_{a_{i}}(t)$ on $\Delta_{\mathcal {A}}=[0,1]$ are shown in Fig.~\ref{fig:201}.
The changes of basis function $\beta_{a_{2}}(t)$ while coefficient $c_{a_{2}}$ varying as shown in Fig.~\ref{fig:202} (the coefficients of curves from bottom to top are $0.1,0.7,1.5,1.9$ respectively), which shows that the coefficient mainly affects the function value of the basis function at each point. However, the changes of the basis function $\beta_{a_{2}}(t)$ when its corresponding knot changes are shown in Fig.~\ref{fig:203} (the knots corresponding to curves from left to right are $\frac{\sqrt{2}}{5},\frac{1}{2},\frac{\sqrt{5}}{3}$ respectively), which means that the knot mainly affect the positions of the maximum point of the basis function.
\begin{figure}[h!]
\begin{minipage}[t]{0.26\textwidth}
\centering
\includegraphics[width=4.8cm]{tuz1.eps}
\caption{GT-Bernstein basis}
\label{fig:201}
\end{minipage}%
\hspace{1.0cm}
\begin{minipage}[t]{0.26\textwidth}
\centering
\includegraphics[width=4.8cm]{tuc2.eps}
\caption{Effect of coefficient changing on GT-Bernstein basis}
\label{fig:202}
\end{minipage}%
\hspace{1.0cm}
\begin{minipage}[t]{0.3\textwidth}
\centering
\includegraphics[width=4.8cm]{tuc3.eps}
\caption{Effect of knot changing on GT-Bernstein basis}
\label{fig:203}
\end{minipage}
\end{figure}
\end{example}

From the Definition \ref{def:201} and rational form (\ref{equ:202}), some properties of the basis functions $\{T_{a_{i}}(t)\}$ can be obtained directly as follows.

\begin{theorem}\label{the:ogtbp}
The rational GT-Bernstein basis functions defined in (\ref{equ:202}) have the following properties:
\begin{description}
  \item[(a)] Nonnegativity. $T_{a_{i}}(t)\geq 0, t\in\Delta_{\mathcal {A}}, i=0,1,\cdots,n$.
  \item[(b)] Partition of the unity. $\sum_{i=0}^{n}T_{a_{i}}(t)\equiv 1$.
  \item[(c)] Normalized totally positive (NTP). The rational GT-Bernstein basis $\{T_{a_{i}}(t)\}_{i=0}^{n}$ is a NTP basis. This property is proved recently by Yu et al.~\cite{ref17}.
  \item[(d)] Endpoints property. At the endpoints of $[a_{0},a_{n}]$, we have
      \begin{equation*}
       T_{a_{i}}(a_{0})=
        \left\{
             \begin{array}{ll}
             1,& i=0,\\
             0,& i\neq 0,
             \end{array}
         \right.
         ~~~~~T_{a_{i}}(a_{n})=
        \left\{
             \begin{array}{ll}
             1,& i=n,\\
             0,& i\neq n,
             \end{array}
         \right.
      \end{equation*}
  \item[(e)] Degeneration property.  The GT-Bernstein basis $\{\beta_{a_{i}}(t)\}$ degenerates to the classical Bernstein basis for $\mathcal {A}=\{0,1,\cdots,n\}$ or $\mathcal {A}=\{0,\frac{1}{n},\cdots,1\}$ after proper parameter transformation, and to toric-Bernstein basis for $\mathcal{A}\subset \mathbb{Z}$. Therefore, the rational GT-Bernstein basis degenerates to rational Bernstein basis for $\mathcal {A}=\{0,1,\cdots,n\}$ or $\mathcal {A}=\{0,\frac{1}{n},\cdots,1\}$ after proper parameter transformation.
\end{description}
\end{theorem}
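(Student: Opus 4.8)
The plan is to dispatch the five items roughly in order of difficulty, noting at the outset that the one genuinely deep property, the NTP property (c), is already attributed to Yu et al.~\cite{ref17}, so I would simply cite it. The remaining four all rest on a single preliminary observation. Taking $k_0=k_1=1$ as in Remark~\ref{rem:201}, the basis is $\beta_{a_i}(t)=c_{a_i}(t-a_0)^{a_i-a_0}(a_n-t)^{a_n-a_i}$; since $a_0\le a_i\le a_n$, both exponents are nonnegative and both bases are nonnegative on $\Delta_{\mathcal{A}}$, so each $\beta_{a_i}(t)\ge 0$. More importantly, I would verify that the denominator $D(t):=\sum_{j}\omega_{a_j}\beta_{a_j}(t)$ is strictly positive throughout $[a_0,a_n]$: on the open interval every factor is positive, while at $t=a_0$ (and symmetrically at $t=a_n$) the convention $0^0=1$ keeps the zero-exponent term alive, giving $\beta_{a_0}(a_0)=c_{a_0}(a_n-a_0)^{a_n-a_0}>0$. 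This makes each $T_{a_i}$ well defined on all of $\Delta_{\mathcal{A}}$.

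Granting this, property (a) is immediate, being the quotient of the nonnegative $\omega_{a_i}\beta_{a_i}(t)$ by the positive $D(t)$; and property (b) is the one-line identity
\[
\sum_{i=0}^{n}T_{a_i}(t)=\frac{\sum_{i=0}^{n}\omega_{a_i}\beta_{a_i}(t)}{\sum_{j=0}^{n}\omega_{a_j}\beta_{a_j}(t)}=\frac{D(t)}{D(t)}=1.
\]
For (d) I would evaluate at the endpoints: at $t=a_0$ the factor $(t-a_0)^{a_i-a_0}$ equals $1$ for $i=0$ and $0$ when $a_i>a_0$, so only $\beta_{a_0}(a_0)$ survives, forcing $T_{a_0}(a_0)=1$ and $T_{a_i}(a_0)=0$ for $i\ne 0$; the case $t=a_n$ is symmetric. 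I would be explicit that this needs the boundary knots to be simple, i.e. $a_0<a_i$ for $i\ne 0$ and $a_i<a_n$ for $i\ne n$, which sharpens the standing hypothesis $a_0\le a_1\le\cdots\le a_n$ at the two ends.

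Property (e) is a matter of explicit bookkeeping, which I would carry out case by case. For $\mathcal{A}=\{0,1,\dots,n\}$ with $c_{a_i}=\tfrac{1}{n^{n}}\binom{n}{i}$, the formula gives $\beta_{a_i}(t)=\binom{n}{i}(t/n)^{i}(1-t/n)^{n-i}$, so the reparametrization $s=t/n$ from $[0,n]$ to $[0,1]$ recovers the Bernstein polynomial $B_i^n(s)=\binom{n}{i}s^{i}(1-s)^{n-i}$. For $\mathcal{A}=\{0,\tfrac1n,\dots,1\}$ with $k_0=k_1=n$ one has $h_0(a_i)=i$ and $h_1(a_i)=n-i$, and the factor $n^{n}$ so produced cancels against $c_{a_i}$, yielding $B_i^n(t)$ with no reparametrization. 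When $\mathcal{A}\subset\mathbb{Z}$ the real exponents become integers and (\ref{equ:201}) is literally the toric-Bernstein definition of~\cite{ref12}. Substituting any of these into (\ref{equ:202}) turns $\{T_{a_i}\}$ into the corresponding rational (NURBS) basis with weights $\omega_{a_i}$, which gives the rational assertion of (e).

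I expect no real obstruction in the self-contained items; the only points demanding care are the $0^0$ convention and the positivity of $D(t)$ at the endpoints, which silently underlie (a), (b) and (d), together with the tacit simple-knot requirement for (d). The single property carrying genuine mathematical weight, total positivity, is exactly the one I would leave to the cited work of Yu et al.~\cite{ref17}.
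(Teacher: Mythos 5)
Your proposal is correct and follows essentially the same route as the paper, which offers no explicit proof and simply asserts that (a), (b), (d), (e) follow directly from Definition~\ref{def:201} and the rational form (\ref{equ:202}), with (c) delegated to Yu et al.~\cite{ref17} exactly as you do. Your added care about the $0^{0}$ convention, the strict positivity of the denominator at the endpoints, and the tacit simple-knot assumption $a_{0}<a_{1}$, $a_{n-1}<a_{n}$ needed for (d) are legitimate refinements of details the paper leaves implicit, not a different argument.
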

Yu et al.~\cite{ref17} presented the following result for GT-Bernstein basis.

\begin{theorem}\label{the:gtbtp}
Suppose $k_{0}=k_{1}=k$ and set $a_{0}\leq t_{0}< t_{1}< \cdots < t_{n}\leq a_{n}$ to be an any increasing sequence. Then the collocation matrix of $\{\beta_{a_{i}}(t)\}_{i=0}^{n}$ at $t_{0}< t_{1}< \cdots < t_{n}$
\begin{equation}\label{equ:gtbcm}
M\left(\begin{array}{c}
\beta_{a_{0}},\cdots,\beta_{a_{n}}\\
t_{0},\cdots,t_{n}
\end{array}\right)
=(\beta_{a_{j}}(t_{i}))_{j=0,1,\cdots,n}^{i=0,1,\cdots,n}
\end{equation}
is a strictly totally positive matrix.
\end{theorem}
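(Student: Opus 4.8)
The plan is to reduce the collocation matrix to a generalized Vandermonde matrix with real exponents and then invoke the total positivity of the power kernel. With $k_{0}=k_{1}=k$, Definition~\ref{def:201} gives, for $t\in(a_{0},a_{n})$,
\begin{equation*}
\beta_{a_{i}}(t)=c_{a_{i}}\,[k(t-a_{0})]^{\,k(a_{i}-a_{0})}\,[k(a_{n}-t)]^{\,k(a_{n}-a_{i})}.
\end{equation*}
First I would isolate the part of the exponent that depends on $i$: writing $k(a_{i}-a_{0})=ka_{i}-ka_{0}$ and $k(a_{n}-a_{i})=ka_{n}-ka_{i}$, this factors as
\begin{equation*}
\beta_{a_{i}}(t)=c_{a_{i}}\,g(t)\,\phi(t)^{\,ka_{i}},\qquad g(t)=[k(t-a_{0})]^{-ka_{0}}[k(a_{n}-t)]^{ka_{n}},\quad \phi(t)=\frac{t-a_{0}}{a_{n}-t},
\end{equation*}
where $g(t)>0$ on $(a_{0},a_{n})$ and $\phi$ is a strictly increasing bijection from $(a_{0},a_{n})$ onto $(0,\infty)$.

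Using this factorization, the collocation matrix in (\ref{equ:gtbcm}) factors as $M=D_{r}\,N\,D_{c}$, where $D_{r}=\mathrm{diag}(g(t_{0}),\dots,g(t_{n}))$ and $D_{c}=\mathrm{diag}(c_{a_{0}},\dots,c_{a_{n}})$ are positive diagonal matrices and $N=\big(\phi(t_{i})^{\,ka_{j}}\big)_{i,j=0}^{n}$. Since multiplying on either side by a positive diagonal matrix multiplies every minor by a positive quantity, $M$ is strictly totally positive if and only if $N$ is, so it suffices to treat $N$. Setting $s_{i}=\phi(t_{i})$ and $\mu_{j}=ka_{j}$, the strict monotonicity of $\phi$ gives $0<s_{0}<\cdots<s_{n}$, while distinctness of the knots together with $k>0$ gives $\mu_{0}<\mu_{1}<\cdots<\mu_{n}$. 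Then $N_{ij}=s_{i}^{\,\mu_{j}}=e^{\mu_{j}\log s_{i}}$, and after the substitution $y_{i}=\log s_{i}$ (still strictly increasing) the entries become $N_{ij}=e^{\mu_{j}y_{i}}$; that is, $N$ samples the exponential kernel $K(y,\mu)=e^{y\mu}$ at increasing arguments in both variables.

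The heart of the matter, and the step I expect to be the main obstacle, is then the classical total positivity of the exponential kernel: for any $y_{i_{0}}<\cdots<y_{i_{p}}$ and any $\mu_{j_{0}}<\cdots<\mu_{j_{p}}$, the generalized Vandermonde determinant $\det\big(e^{\mu_{j_{l}}y_{i_{m}}}\big)_{l,m=0}^{p}$ is strictly positive. I would prove this by the standard two-step argument. The key lemma is that a nontrivial real exponential sum $\sum_{l=0}^{p}c_{l}e^{\mu_{j_{l}}y}$ with distinct exponents has at most $p$ real zeros, which follows by induction using Rolle's theorem (factor out $e^{\mu_{j_{0}}y}$ and differentiate to lower the number of terms). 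Granting this, positivity of the ordered determinant follows by induction on $p$: regarding the top node as a free variable $y$ and expanding the determinant along the last row produces such an exponential sum, vanishing at the $p$ lower nodes and hence having no further zeros, so it keeps a constant sign; as $y\to+\infty$ the dominant term $e^{\mu_{j_{p}}y}$ carries the $p\times p$ leading minor, positive by the inductive hypothesis, which pins the sign to be positive. Applying this to every selection of rows and columns shows that all minors of $N$ are positive, so $N$, and therefore $M$, is strictly totally positive.

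One technical point deserves mention. The factorization above, and the positivity of every entry of $M$, require the collocation nodes to lie in the open interval, i.e. $a_{0}<t_{0}$ and $t_{n}<a_{n}$; at a closed endpoint, say $t_{0}=a_{0}$, one has $\beta_{a_{j}}(a_{0})=0$ for all $j\geq1$, so $M$ could not be strictly totally positive there. I would therefore state and prove the result for interior nodes, which is the genuine content; the general interval $[a_{0},a_{n}]$ is recovered only in the weaker (non-strict) totally positive sense by a continuity/limit argument.
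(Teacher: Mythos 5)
Your argument is correct, but note that the paper itself does not prove this theorem at all: it is quoted from Yu et al.~\cite{ref17}, so there is no in-paper proof to compare against. Your route --- factoring $\beta_{a_j}(t_i)=c_{a_j}\,g(t_i)\,\phi(t_i)^{ka_j}$ with $\phi(t)=(t-a_0)/(a_n-t)$, stripping off the positive diagonal matrices, and reducing to the strict total positivity of the generalized Vandermonde matrix $\bigl(s_i^{\mu_j}\bigr)$ for $0<s_0<\cdots<s_n$ and real $\mu_0<\cdots<\mu_n$ via the Rolle/Descartes bound on zeros of exponential sums --- is the standard and essentially the only natural proof, and it is in substance what the cited reference does. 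Two of your side remarks are worth keeping. First, the endpoint caveat is a genuine defect of the statement as printed: if $t_0=a_0$ then the first row of $M$ is $(\beta_{a_0}(a_0),0,\dots,0)$, so $M$ has zero entries and cannot be strictly totally positive; strictness requires $a_0<t_0$ and $t_n<a_n$, exactly as you say. Second, you quietly assume the knots $a_j$ are distinct when you write $\mu_0<\cdots<\mu_n$; the theorem's hypothesis only says $a_0\leq a_1\leq\cdots\leq a_n$, and if two knots coincide the corresponding columns of $M$ are proportional and the conclusion fails, so distinctness is another implicit hypothesis you are right to make explicit. Neither point is a gap in your reasoning; both are points where your proof is more careful than the statement it proves.
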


Since the basis $\{\beta_{a_{i}}(t)\}^{n}_{i=0}$ defined by equation (\ref{equ:201}) may do not hold the property of partition of the unity on $\Delta_{\mathcal {A}}$ for arbitrary positive coefficients, we present a method to choose coefficients by Theorem \ref{the:gtbtp}, which makes the basis $\{\beta_{a_{i}}(t)\}$ has partition of the unity on a given increasing sequence $a_{0}\leq t_{0}< t_{1}< \cdots < t_{n}\leq a_{n}$.

Given an increasing sequence $a_{0}\leq t_{0}< t_{1}< \cdots < t_{n}\leq a_{n}$, we have the following system of equations:
\begin{equation}\label{equ:gtbose}
  \sum_{j=0}^{n}\beta_{a_{j}}(t_{i})=c_{a_{0}}h_{0}(t_{i})^{h_{0}(a_{0})}h_{1}(t_{i})^{h_{1}(a_{0})}+\cdots+c_{a_{n}}h_{0}(t_{i})^{h_{0}(a_{n})}h_{1}(t_{i})^{h_{1}(a_{n})}=1,~~i=0,\cdots,n.
\end{equation}
If we write $\mathbf{C}=(c_{a_{0}},\cdots,c_{a_{n}})^{T}$ and $\mathbf{1}=(1,\cdots,1)^{T}$, then we obtain
\begin{equation}\label{equ:gtbosem}
  M\left(\begin{array}{c}
\beta_{a_{0}},\cdots,\beta_{a_{n}}\\
t_{0},\cdots,t_{n}
\end{array}\right)\mathbf{C}=\mathbf{1}.
\end{equation}
It's clear that the basis $\{\beta_{a_{i}}(t)\}^{n}_{i=0}$ satisfies the conditions of Theorem \ref{the:gtbtp}, then the matrix $M$ is a strictly totally positive matrix and system of equations (\ref{equ:gtbosem}) has a unique solution. For the bivariate generalized toric-Bernstein basis in Section \ref{subsec:22}, the method for selection of the coefficients is similar to the univariate case.
\subsection{Bivariate generalized toric-Bernstein basis functions}
\label{subsec:22}
Consider a finite set of real points $\mathcal {A}=\{\mathbf{a}_{0},\mathbf{a}_{1},\cdots,\mathbf{a}_{n}\}\subset\mathbb{R}^{2}$, Let $\Delta_{\mathcal {A}}$ be the convex hull of $\mathcal {A}$. The lines defined by edges $\phi_{i}$ of $\Delta_{\mathcal {A}}$ are $h_{i}(u,v)=\xi_{i}u+\eta_{i}v+\rho_{i}$, where $\langle \xi_{i},\eta_{i} \rangle$ is the normal vector of $\phi_{i}$ towards inside of $\Delta_{\mathcal {A}}$ such that $h_{i}(u,v)\geq0, (u,v) \in \Delta_{\mathcal {A}},i=1,\cdots,r$. We construct the generalized toric-Bernstein basis functions as follows.

\begin{definition}\label{def:202}
Let $\mathcal {A}=\{\mathbf{a}_{0},\mathbf{a}_{1},\cdots,\mathbf{a}_{n}\}\subset\mathbb{R}^{2}$ be a finite collection of real points, and set $\Delta_{\mathcal {A}}$ to be the convex hull of $\mathcal {A}$. Then, for any point $\mathbf{a}_{i}$ in $\mathcal {A}$, we call
\begin{equation}\label{equ:204}
  \beta_{\mathbf{a}_{i}}(u,v)=c_{\mathbf{a}_{i}}h_{1}(u,v)^{h_{1}(\mathbf{a}_{i})} \cdots h_{r}(u,v)^{h_{r}(\mathbf{a}_{i})},~~(u,v)\in\Delta_{\mathcal {A}},
\end{equation}
is the bivariate generalized toric-Bernstein (GT-Bernstein) basis function, where $c_{\mathbf{a}_{i}}>0$ is the coefficient and $\mathbf{a}_{i}$ is called knot.
\end{definition}

The rational form of the GT-Bernstein basic function $\beta_{\mathbf{a}_{i}}(u,v)$ is
\begin{equation}\label{equ:205}
  T_{\mathbf{a}_{i}}(u,v)=\frac{\omega_{\mathbf{a}_{i}}\beta_{\mathbf{a}_{i}}(u,v)}{\sum^{n}_{i=0}\omega_{\mathbf{a}_{i}}\beta_{\mathbf{a}_{i}}(u,v)},~~(u,v)\in\Delta_{\mathcal {A}}.
\end{equation}
where $\omega_{\mathbf{a}_{i}} > 0$ is called weight.

\begin{remark}\label{rem:202}
In (\ref{equ:204}), the basis function depends on the choice of coefficients, and the coefficients can vary from case to case. If there is no special explanation, we set $c_{\mathbf{a}_{i}}\equiv 1$.

For $\mathcal {A}=\{\mathbf{a}_{0},\mathbf{a}_{1},\cdots,\mathbf{a}_{n}\}\subset\mathbb{Z}^{2}$, the basis $\{\beta_{\mathbf{a}_{i}}(u,v)\}$ defined by (\ref{equ:204}) degenerates to the toric-Bernstein basis in \cite{ref12}. In particular, for $\mathcal {A}=\{(i,j)\in \mathbb{Z}^{2}\mid i+j\leq k,i\geq0,j\geq0 \}$, if we choose proper coefficients, then the GT-Bernstein basis degenerates to the bivariate triangular Bernstein basis. Analogously, for $\mathcal {A}=\{(i,j)\in \mathbb{Z}^{2}\mid 0\leq i\leq m,0\leq j\geq n\}$, the GT-Bernstein basis degenerates to the bivariate tensor product Bernstein basis if coefficients chosen properly.
\end{remark}

\begin{example}\label{exa:gtbt}
Let $\widetilde{{\mathcal{A}}}=\{(0,2),(1,2),(0,\frac{6}{5}),(\frac{8}{7},\frac{8}{7}),(2,1),(0,0),(\frac{6}{5},0),(2,0)\}$ (see Fig.~\ref{fig1:subfig:b}). By (\ref{equ:204}), the GT-Bernstein basis defined by $\widetilde{{\mathcal{A}}}$ are given as below.
\begin{equation*}
  \beta_{(0,2)}(u,v)\!\!=\!\!(3-u-v)(2-u)^{2}v^{2},~\beta_{(1,2)}(u,v)\!\!=\!\!(2-u)v^{2}u,~\beta_{(0,\frac{6}{5})}(u,v)\!\!=\!\!(2-v)^{\frac{4}{5}}(3-u-v)^{\frac{9}{5}}(2-u)^{2}v^{\frac{6}{5}},
\end{equation*}
\begin{equation*}
  \beta_{(\frac{8}{7},\frac{8}{7})}\!(\!u,v\!)\!\!=\!\!(2-v)\!^{\frac{6}{7}}\!(3-u-v)\!^{\frac{5}{7}}\!(2-u)^{\frac{6}{7}}\!v^{\frac{8}{7}}\!u^{\frac{8}{7}},~\beta_{(2,1)}\!(\!u,v\!)\!\!=\!\!(2-v)vu^{2},~\beta_{(0,0)}\!(\!u,v\!)\!\!=\!\!(2-v)\!^{2}\!(3-u-v)\!^{3}\!(2-u)^{2},
\end{equation*}
\begin{equation*}
  \beta_{(\frac{6}{5},0)}(u,v)\!=\!(2-v)^{2}(3-u-v)^{\frac{9}{5}}(2-u)^{\frac{4}{5}}u^{\frac{6}{5}},~\beta_{(2,0)}(u,v)\!=\!(2-v)^{2}(3-u-v)u^{2}.
\end{equation*}

Three of the basis functions are shown in Fig.~\ref{fig:gtbt1}. We further set each weight $\omega_{\mathbf{a}_{i}}=1$, then the rational forms of these three basis functions on $\Delta_{\mathcal {A}}$ are shown in Fig.~\ref{fig:gtbt2}.
\begin{figure}[h!]
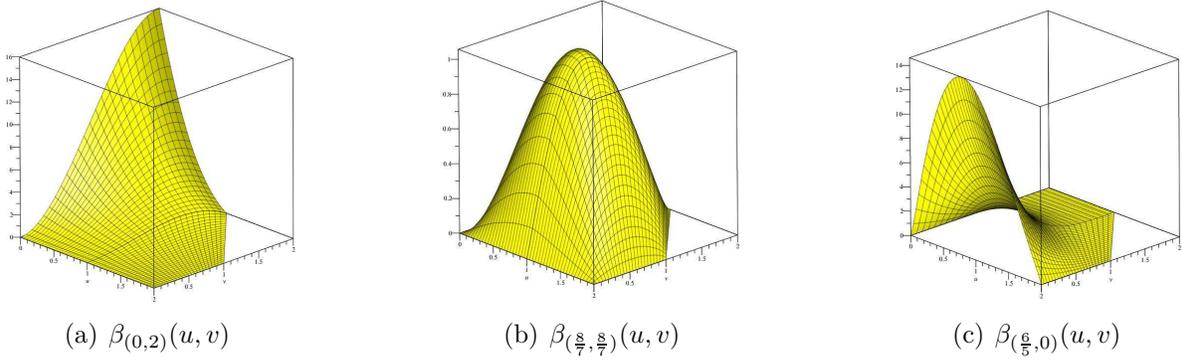

\centering
  \subfigure[$\beta_{(0,2)}(u,v)$]{
    \label{figgtbt1:subfig:a} 
    \includegraphics[width=4.5cm]{tu111.eps}}
 \hspace{1cm}
  \subfigure[$\beta_{(\frac{8}{7},\frac{8}{7})}(u,v)$]{
   \label{figgtbt1:subfig:b} 
    \includegraphics[width=4.5cm]{tu112.eps}}
  \hspace{1cm}
  \subfigure[$\beta_{(\frac{6}{5},0)}(u,v)$]{
    \label{figgtbt1:subfig:c} 
    \includegraphics[width=4.5cm]{tu113.eps}}
  \caption{GT-Bernstein basis functions}
  \label{fig:gtbt1}
\end{figure}

\begin{figure}[h!]
\centering
  \subfigure[$T_{(0,2)}(u,v)$]{
    \includegraphics[width=4.5cm]{tu121.eps}}
  \hspace{1cm}
  \subfigure[$T_{(\frac{8}{7},\frac{8}{7})}(u,v)$]{
    \includegraphics[width=4.5cm]{tu122.eps}}
   \hspace{1cm}
  \subfigure[$T_{(\frac{6}{5},0)}(u,v)$]{
    \includegraphics[width=4.5cm]{tu123.eps}}
  \caption{Rational GT-Bernstein basis functions}
  \label{fig:gtbt2}
\end{figure}
\end{example}

Suppose edges $\phi_{i}(i=1,\cdots,r)$ of the convex hull $\Delta_{\mathcal {A}}$ are ordered counterclockwise
and let $V_{i}$ be vertex of $\Delta_{\mathcal {A}}$ where two edges $\phi_{i}$ and $\phi_{i+1}$ meet, $(i =1,\cdots,r)$.
The indices will be treated in a cyclic fashion: for instance, $\phi_{0}=\phi_{r}$, $\phi_{r+1} = \phi_{1}$ and so on. Denote by $\hat{\phi_{i}}=\phi_{i}\cap \mathcal {A}$ the intersection of $\mathcal {A}$ and $\phi_{i}$. Note that $\{V_{i}\}^{r}_{i=1}$ and $\hat{\phi_{i}}$ are subsets of $\mathcal {A}$ respectively, $i=1,\cdots,r$.

From Definition \ref{def:202} and rational form (\ref{equ:205}), we can obtain the following properties of the basis functions $\{T_{\mathbf{a}_{i}}(u,v)\}$ directly.

\begin{theorem}\label{the:201}
The rational forms of the GT-Bernstein basis functions defined in (\ref{equ:205}) have the following properties:
\begin{description}
  \item[(a)] Nonnegativity. $T_{\mathbf{a}_{i}}(u,v)\!\geq \!0,(u,v)\!\!\in\!\!\Delta_{\mathcal {A}}, i=0,1,\!\cdots\!,n$.
  \item[(b)] Partition of the unity. $\sum_{i=0}^{n}T_{\mathbf{a}_{i}}(u,v)\equiv 1$.
  \item[(c)] Boundary property. When $(u,v)$ is constrained on the edge $\phi_{j}$ of $\Delta_{\mathcal {A}}$, all basis functions $\beta_{\mathbf{a}_{i}}(u,v)$ and $T_{\mathbf{a}_{i}}(u,v)$ with indices $\mathbf{a}_{i}\in \mathcal {A}\setminus \hat{\phi_{j}}$ vanish, that is:
      \begin{equation}\label{equ:206}
        \begin{split}
        \left\{
             \begin{array}{lr}
             \beta_{\mathbf{a}_{i}}(u,v)=0,~\mathbf{a}_{i}\in \mathcal {A}\setminus \hat{\phi_{j}}, &  \\
              & (u,v)\in \phi_{j}.\\
             \beta_{\mathbf{a}_{i}}(u,v)\neq 0,~\mathbf{a}_{i}\in \hat{\phi_{j}}, &
             \end{array}
        \right.\\
        \left\{
             \begin{array}{lr}
             T_{\mathbf{a}_{i}}(u,v)=0,~\mathbf{a}_{i}\in \mathcal {A}\setminus \hat{\phi_{j}}, &  \\
              & (u,v)\in \phi_{j}.\\
             T_{\mathbf{a}_{i}}(u,v)\neq 0,~\mathbf{a}_{i}\in \hat{\phi_{j}}, &
             \end{array}
        \right.
        \end{split}
      \end{equation}
  \item[(d)] Corner points property. At the vertices of $\Delta_{\mathcal {A}}$, we have
     \begin{equation}\label{equ:207}
        \left\{
             \begin{array}{lr}
             T_{\mathbf{a}_{i}}(V_{i})=1, ~\mathbf{a}_{i}=V_{i},  \\
             T_{\mathbf{a}_{i}}(V_{i})= 0, ~\mathbf{a}_{i}\neq V_{i}.
             \end{array}
        \right.
      \end{equation}
  \item[(e)] Degeneration property. For $\mathcal {A}=\{\mathbf{a}_{0},\mathbf{a}_{1},\cdots,\mathbf{a}_{n}\}\subset\mathbb{Z}^{2}$, the basis defined by (\ref{equ:204}) degenerates into toric-Bernstein basis defined in~\cite{ref12}. In particular, the GT-Bernstein basis degenerates to the bivariate triangular Bernstein basis for $\mathcal {A}=\{(i,j)\in \mathbb{Z}^{2}\mid i+j\leq k,i\geq0,j\geq0 \}$, and to the tensor product Bernstein basis for $\mathcal {A}=\{(i,j)\in \mathbb{Z}^{2}\mid 0\leq i\leq m,0\leq j\geq n\}$, if coefficients selected properly.
\end{description}
\end{theorem}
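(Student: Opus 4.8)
The plan is to read off all five properties from the multiplicative structure of $\beta_{\mathbf{a}_{i}}$ in Definition~\ref{def:202}, together with two elementary facts about the edge functions: each $h_{k}(u,v)$ is nonnegative on $\Delta_{\mathcal{A}}$ and vanishes exactly on the line carrying the edge $\phi_{k}$, so that $h_{k}(\mathbf{a}_{i})=0$ precisely when $\mathbf{a}_{i}\in\hat{\phi_{k}}$. Throughout I adopt the convention $0^{0}=1$, so that a factor $h_{k}(u,v)^{h_{k}(\mathbf{a}_{i})}$ equals $1$ whenever $h_{k}(\mathbf{a}_{i})=0$, even at points where $h_{k}(u,v)=0$. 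With these in hand, property (a) is immediate: for $(u,v)\in\Delta_{\mathcal{A}}$ every factor $h_{k}(u,v)^{h_{k}(\mathbf{a}_{i})}$ is a nonnegative base raised to a real exponent, hence nonnegative, and $c_{\mathbf{a}_{i}}>0$, so $\beta_{\mathbf{a}_{i}}(u,v)\geq 0$; since $\omega_{\mathbf{a}_{i}}>0$ the rational form $T_{\mathbf{a}_{i}}$ in (\ref{equ:205}) is a quotient of a nonnegative numerator by a nonnegative denominator.

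For (b) I would first check that the denominator $\sum_{i}\omega_{\mathbf{a}_{i}}\beta_{\mathbf{a}_{i}}(u,v)$ never vanishes on $\Delta_{\mathcal{A}}$, which is the only point genuinely requiring an argument. At an interior point all $h_{k}(u,v)>0$, so every $\beta_{\mathbf{a}_{i}}>0$; on the relative interior of an edge $\phi_{j}$ the functions indexed by $\hat{\phi_{j}}$ stay positive (only the $h_{j}$ factor becomes $1$, the rest remain positive, and $\hat{\phi_{j}}$ is nonempty since it contains the two endpoint vertices); and at a vertex $V_{i}$ the function whose knot is $V_{i}$ is positive. Hence the denominator is strictly positive everywhere, and $\sum_{i}T_{\mathbf{a}_{i}}(u,v)\equiv 1$ then follows directly from the definition (\ref{equ:205}).

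Properties (c) and (d) both follow by tracking which factors vanish. Fix $(u,v)\in\phi_{j}$, so $h_{j}(u,v)=0$. If $\mathbf{a}_{i}\in\mathcal{A}\setminus\hat{\phi_{j}}$ then $h_{j}(\mathbf{a}_{i})>0$ and the factor $h_{j}(u,v)^{h_{j}(\mathbf{a}_{i})}=0$ forces $\beta_{\mathbf{a}_{i}}(u,v)=0$; if $\mathbf{a}_{i}\in\hat{\phi_{j}}$ then $h_{j}(\mathbf{a}_{i})=0$, that factor is $1$, and the product over the remaining edges is positive on the relative interior of $\phi_{j}$, giving the non-vanishing claim (to be read on the relative interior, the vertices being handled by (d)). For (d), at a vertex $V_{i}$ the two edge functions $h_{i}$ and $h_{i+1}$ vanish; since $V_{i}$ is the unique point of $\Delta_{\mathcal{A}}$ lying on both $\phi_{i}$ and $\phi_{i+1}$, any knot $\mathbf{a}_{k}\neq V_{i}$ has at least one of $h_{i}(\mathbf{a}_{k}),h_{i+1}(\mathbf{a}_{k})$ positive, killing $\beta_{\mathbf{a}_{k}}(V_{i})$, whereas for $\mathbf{a}_{k}=V_{i}$ both exponents are $0$ and the remaining factors are positive, so $\beta_{V_{i}}(V_{i})>0$. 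Passing to the rational form and using (b) yields $T_{V_{i}}(V_{i})=1$ and $T_{\mathbf{a}_{k}}(V_{i})=0$ for $\mathbf{a}_{k}\neq V_{i}$.

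Finally, (e) is by specialization. When $\mathcal{A}\subset\mathbb{Z}^{2}$ the exponents $h_{k}(\mathbf{a}_{i})$ are integers and (\ref{equ:204}) is literally the toric-Bernstein basis of Krasauskas~\cite{ref12}. For the standard triangle $\{(i,j):i+j\le k,\,i\ge 0,\,j\ge 0\}$ and the rectangle $\{(i,j):0\le i\le m,\,0\le j\le n\}$ I would write down the three (respectively four) affine edge functions explicitly, read off the exponents at each lattice knot, and compare the resulting monomials with the Bernstein monomials; choosing the coefficients $c_{\mathbf{a}_{i}}$ to be the corresponding multinomial (respectively product of binomial) coefficients matches the two bases exactly. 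I expect this last property to be the most computation-heavy step, whereas (a)--(d) are essentially bookkeeping on the product once the $0^{0}$ convention and the relative-interior caveat in (c) are made explicit.
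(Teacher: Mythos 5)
Your proof is correct and follows the same direct-verification route the paper intends: the paper asserts these properties ``can be obtained directly'' from Definition~\ref{def:202} and the rational form (\ref{equ:205}) and supplies no further argument, so your write-up simply fills in those details (positivity of the denominator for (b), the vanishing pattern of the edge factors for (c) and (d), and specialization for (e)). Your explicit $0^{0}=1$ convention and the observation that the non-vanishing claim in (c) holds only on the relative interior of $\phi_{j}$, with the vertices covered by (d), are careful refinements the paper leaves implicit.
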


\section{Generalized Toric-B\'ezier Curves}
\label{sec:3}
For given control points and weights, we can use the Bernstein basis functions to construct the classical rational B\'ezier curve. The classical rational B\'ezier curve has many good properties, such as convex hull property, boundary property, and affine invariance. In the same way, the basis functions defined by (\ref{equ:202}) can be used to define a new class of rational curves.

\begin{definition}\label{def:301}
Given real points set $\mathcal {A}=\{a_{0},a_{1},\cdots,a_{n}\}$, control points $\mathcal{B}=\{\mathbf{b}_{a_{i}}\mid a_{i}\in\mathcal {A}\}\subset \mathbb{R}^{3}$, and weights $ \omega=\{\omega_{a_{i}}>0\mid a_{i}\in\mathcal {A}\}$, the rational parametric curve
\begin{equation}\label{equ:301}
   \mathbf{P}_{\mathcal {A},\omega,\mathcal{B}}(t)=\sum_{i=0}^{n}\mathbf{b}_{a_{i}}T_{a_{i}}(t)=\sum_{i=0}^{n}\mathbf{b}_{a_{i}}\frac{\omega_{a_{i}}\beta_{a_{i}}(t)}{\sum^{n}_{i=0}\omega_{a_{i}}\beta_{a_{i}}(t)},~~~~t\in\Delta_{\mathcal {A}}.
\end{equation}
is called the generalized toric-B\'ezier curve (GT-B\'ezier curve for short) of degree $n$.The n-edge polyline polygon is obtained  by sequentially connecting two adjacent control points of $\mathcal{B}$ with a straight line segment, is called control polygon.
\end{definition}

\begin{remark}\label{rem:301}
Although the GT-Bernstein basis defined by equation (\ref{equ:201}) depends on the selection of the coefficients $k_{0}$ and $k_{1}$, the GT-B\'ezier curve is independent on the choice of these two parameters. It can be known from the results in \cite{ref16}, the GT-B\'ezier curve defined by the equation (\ref{equ:301}) is obtained by the projection (the projection is related to the weights and the control points) of the high-dimensional real projective toric variety defined by the $\mathcal {A}=\{a_{0},a_{1},\cdots,a_{n}\}$. Given point set $\mathcal {A}$, for different coefficients $k_{0}$ and $k_{1}$,  after unitizing the corresponding toric variety and eliminating the constant in the projective space, the toric varieties are identical, then the GT-B\'ezier curve defined by point set $\mathcal {A}$ is also the same. For more Details refer to~\cite{ref13, ref16}.

The degree of $n$ of curve in Definition \ref{def:301} is just the number of forms in the curve, one less than the number of knots of $\mathcal {A}$, not exactly the polynomial degree of curve in general sense. If $\mathcal {A}\subset \mathbb{Z}$, then this degree is exactly the polynomial degree of curve .
\end{remark}

\begin{example}\label{exa:301}
Let $\mathcal {A}=\{a_{0}=0,a_{1}=\frac{\sqrt{2}}{4},a_{2}=\frac{1}{2},a_{3}=\frac{\sqrt{2}}{2},a_{4}=1\}$ as show in Example \ref{exa:201} , weights $\omega_{a_{0}}=1,\omega_{a_{1}}=10,\omega_{a_{2}}=20,\omega_{a_{3}}=6,\omega_{a_{4}}=5$ and control points $\mathbf{b}_{a_{0}}=(0,0),\mathbf{b}_{a_{1}}=(0.4,1.3),\mathbf{b}_{a_{2}}=(2,2),\mathbf{b}_{a_{3}}=(3.7,1.5),\mathbf{b}_{a_{4}}=(4,0)$. Suppose $c_{a_{i}}=1(i=0,\cdots,4)$, then the quadratic GT-B\'ezier curve is
\begin{equation*}
   \mathbf{P}_{\mathcal {A},\omega,\mathcal{B}}(t)=\sum_{i=0}^{4}\mathbf{b}_{a_{i}}T_{a_{i}}(t),~t\in[0,1],
\end{equation*}
and the curve is shown in Fig.~\ref{fig:301}.
\begin{figure}[h!]
\begin{center}
\includegraphics[width=7cm]{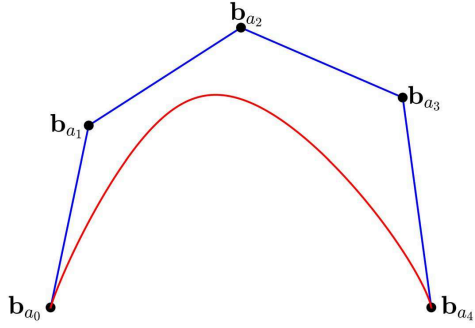}
\caption{Quadratic GT-B\'ezier curve}
\label{fig:301}
\end{center}
\end{figure}
\end{example}

From the properties of the GT-Bernstein basis functions associated with $\mathcal {A}=\{a_{0},a_{1},\cdots,a_{n}\}\subset \mathbb{R}$, some properties of the GT-B\'ezier curve can be obtained as follows:

\begin{description}
  \item[(a)] {\bf{Affine invariance and convex hull property}}. Since the basis (\ref{equ:202}) have the properties of nonnegativity and partition of the unity, these show that the corresponding GT-B\'ezier curve (\ref{equ:301}) has affine invariance and convex hull property.
  \item[(b)] {\bf{Endpoints interpolation property}}. This property follows directly from the corner points property of the basis (\ref{equ:202}), that is $\mathbf{P}_{\mathcal {A},\omega,\mathcal{B}}(a_{0})=\mathbf{b}_{a_{0}},\mathbf{P}_{\mathcal {A},\omega,\mathcal{B}}(a_{n})=\mathbf{b}_{a_{n}}$.
  \item[(c)] {\bf{Progressive iteration approximation (PIA) property}}. The GT-B\'ezier curve has PIA property from the result in \cite{ref20} because its basis $\{T_{a_{i}}(t)\}_{i=0}^{n}$ is a NTP basis.
  \item[(d)] {\bf{Degeneration property}}. If $a_{i}=i$ (or $a_{i}=\frac{i}{n}$), $(i=0,1,\cdots,n)$ and $k_{0}=k_{1}=n$, then the GT-B\'ezier curve (\ref{equ:301}) degenerates into the classical rational B\'ezier curve after reparameterization and coefficients selected properly. For $\mathcal {A}=\{a_{0},a_{1},\cdots,a_{n}\}\subset\mathbb{Z}$, the GT-B\'ezier curve (\ref{equ:301}) is the toric B\'ezier curve defined in~\cite{ref18}, which is exactly the one-dimensional form of the toric surface defined in~\cite{ref12}.
  \item[(e)] {\bf{Endpoints tangent vectors}}.   
%
%
  For
  \begin{equation*}
    k_{0}=\frac{1}{a_{1}-a_{0}}, k_{1}=\frac{1}{a_{n}-a_{n-1}},
  \end{equation*}
  the tangent vectors at the end points of curve GT-B\'ezier curve (\ref{equ:301}) are
    \begin{equation}\label{equ:302}
  \begin{split}
    &\mathbf{P}_{\mathcal {A},\omega,\mathcal{B}}^{\prime}(a_{0})\!\!=\!\!\frac{c_{a_{1}}k_{0}k_{1}^{-\frac{k_{1}}{k_{0}}}(a_{n}\!\!-\!\!a_{0})^{-\frac{k_{1}}{k_{0}}}\omega_{a_{1}}(\mathbf{b}_{a_{1}}\!\!-\!\!\mathbf{b}_{a_{0}})}{c_{a_{0}}\omega_{a_{0}}},\\
    &\mathbf{P}_{\mathcal {A},\omega,\mathcal{B}}^{\prime}(a_{n})\!\!=\!\!\frac{c_{a_{n-1}}k_{1}k_{0}^{-\frac{k_{0}}{k_{1}}}(a_{n}\!\!-\!\!a_{0})^{-\frac{k_{0}}{k_{1}}}\omega_{a_{n\!-\!1}}(\mathbf{b}_{a_{n}}\!\!-\!\!\mathbf{b}_{a_{n-1}})}{c_{a_{n}}\omega_{a_{n}}}.\\
  \end{split}
  \end{equation}

  We can see the tangent vectors at the end points of curve $\mathbf{P}_{\mathcal {A},\omega,\mathcal{B}}(t)$ are parallel to $\overrightarrow{\mathbf{b}_{a_{0}}\mathbf{b}_{a_{1}}}$ and $\overrightarrow{\mathbf{b}_{a_{n-1}}\mathbf{b}_{a_{n}}}$ respectively. And this property can be used to construct $G^{1}$ continuous piecewise GT-B\'ezier curve.
  \item[(f)] {\bf{Multiple knot property}}. When a knot in $\mathcal {A}$ tends to its adjacent knot, the following results describe the limit property of the GT-B\'ezier curve, which also show the resulting GT-B\'ezier curve defined by $\mathcal {A}$ with multiple knots.

  \begin{theorem}\label{the:301}
      Suppose $c_{a_{i}}=1(i=0,\cdots,n)$. When knot $a_{k}~(0\leq k<n)$ approaches to its adjacent knot $a_{k+1}$, the limit of GT-B\'ezier curve $\mathbf{P}_{\mathcal {A},\omega,\mathcal{B}}(t)$ of degree $n$ defined in (\ref{equ:301}) is exactly the GT-B\'ezier curve $\widetilde{\mathbf{P}}_{\widetilde{\mathcal {A}},\widetilde{\omega},\widetilde{\mathcal{B}}}(t)$ of degree $n-1$, defined as
      \begin{equation}\label{equ:303}
      \lim_{a_{k}\!\rightarrow a_{k+1}}\!\!\mathbf{P}_{\mathcal {A},\omega,\mathcal{B}}(t)\!\! = \!\! \widetilde{\mathbf{P}}_{\widetilde{\mathcal {A}},\widetilde{\omega},\widetilde{\mathcal{B}}}(t)\!\!=\!\!\frac{\sum_{i=0}^{k-1}\omega_{a_{i}}\mathbf{b}_{a_{i}}\beta_{a_{i}}(t)\!+\!\widetilde{\omega}_{a_{k+1}}\widetilde{\mathbf{b}}_{a_{k+1}}\beta_{a_{k+1}}(t)\!+\!\sum_{i=k+2}^{n}\omega_{a_{i}}\mathbf{b}_{a_{i}}\beta_{a_{i}}(t)}{\sum_{i=0}^{k-1}\omega_{a_{i}}\beta_{a_{i}}(t)\!+\!\widetilde{\omega}_{a_{k+1}}\beta_{a_{k+1}}(t)\!+\!\sum_{i=k+2}^{n}\omega_{a_{i}}\beta_{a_{i}}(t)},
      \end{equation}
      where $\widetilde{\omega}_{a_{k+1}}\!\!\!=\!\omega_{a_{k}}\!+\omega_{a_{k+1}}$, $\widetilde{\mathbf{b}}_{a_{k+1}}\!\!=\!\!\frac{\omega_{a_{k}}}{\omega_{a_{k}}\!+\!\omega_{a_{k+1}}}\mathbf{b}_{a_{k}}+\frac{\omega_{a_{k+1}}}{\omega_{a_{k}}+\omega_{a_{k+1}}}\mathbf{b}_{a_{k+1}}$, $\widetilde{\mathcal {A}}\!\!=\!\!\{a_{0},\!\cdots\!,a_{k-1},a_{k+1},\!\cdots\!,a_{n}\}$, $\widetilde{\mathcal{B}}\!\!=\!\!\{\mathbf{b}_{a_{0}},\cdots,\mathbf{b}_{a_{k-1}},\widetilde{\mathbf{b}}_{a_{k+1}},\mathbf{b}_{a_{k+2}},\cdots,\mathbf{b}_{a_{n}}\}$ and $\widetilde{\omega}\!\!=\!\!\{\omega_{a_{0}},\cdots,\omega_{a_{k-1}},\widetilde{\omega}_{a_{k+1}},\omega_{a_{k+2}},\cdots,\omega_{a_{n}}\}$.
  \end{theorem}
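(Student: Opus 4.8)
The plan is to compute the limit directly, exploiting the explicit form of the basis functions together with the fact that only the two coalescing terms in the numerator and denominator of (\ref{equ:301}) are affected. By Remark~\ref{rem:201} and the independence statement of Remark~\ref{rem:301} I may take $k_{0}=k_{1}=1$, so that with $c_{a_{i}}=1$ the basis functions read $\beta_{a_{i}}(t)=(t-a_{0})^{a_{i}-a_{0}}(a_{n}-t)^{a_{n}-a_{i}}$. The key observation is that for each fixed $t$ in the interior $(a_{0},a_{n})$ the map $a\mapsto(t-a_{0})^{a-a_{0}}(a_{n}-t)^{a_{n}-a}$ is continuous, because $x\mapsto s^{x}$ is continuous for every fixed base $s>0$. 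Hence $\lim_{a_{k}\to a_{k+1}}\beta_{a_{k}}(t)=\beta_{a_{k+1}}(t)$, while every other factor $\beta_{a_{i}}(t)$ with $i\neq k$ depends only on the fixed quantities $a_{0},a_{n},a_{i}$ and is therefore left unchanged.

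First I would write $\mathbf{P}_{\mathcal{A},\omega,\mathcal{B}}(t)=N(t)/D(t)$ with $N(t)=\sum_{i}\omega_{a_{i}}\mathbf{b}_{a_{i}}\beta_{a_{i}}(t)$ and $D(t)=\sum_{i}\omega_{a_{i}}\beta_{a_{i}}(t)$, and pass to the limit in $N$ and $D$ separately. Since $D(t)$ is a finite sum of strictly positive terms for $t\in(a_{0},a_{n})$, its limit stays positive, so the limit of the quotient equals the quotient of the limits. Taking the limit collapses the two terms indexed by $k$ and $k+1$: in $D(t)$ they merge into $(\omega_{a_{k}}+\omega_{a_{k+1}})\beta_{a_{k+1}}(t)$, and in $N(t)$ they merge into $(\omega_{a_{k}}\mathbf{b}_{a_{k}}+\omega_{a_{k+1}}\mathbf{b}_{a_{k+1}})\beta_{a_{k+1}}(t)$, while all remaining terms are untouched.

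It then remains to identify the coalesced coefficients with the data of the degree-$(n-1)$ curve. This is the short algebraic check $\widetilde{\omega}_{a_{k+1}}=\omega_{a_{k}}+\omega_{a_{k+1}}$ and $\widetilde{\omega}_{a_{k+1}}\widetilde{\mathbf{b}}_{a_{k+1}}=\omega_{a_{k}}\mathbf{b}_{a_{k}}+\omega_{a_{k+1}}\mathbf{b}_{a_{k+1}}$, which are exactly the definitions of $\widetilde{\omega}_{a_{k+1}}$ and $\widetilde{\mathbf{b}}_{a_{k+1}}$ stated in the theorem. Because the endpoints $a_{0}$ and $a_{n}$ are unchanged when $k\geq 1$, the surviving factor $\beta_{a_{k+1}}(t)$ is precisely the GT-Bernstein basis function attached to $a_{k+1}$ for the reduced set $\widetilde{\mathcal{A}}=\{a_{0},\dots,a_{k-1},a_{k+1},\dots,a_{n}\}$, and substituting these identities into the limit reproduces formula (\ref{equ:303}) verbatim.

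The main obstacle is the boundary case $k=0$, where the coalescing knot $a_{0}$ is itself the left endpoint of $\Delta_{\mathcal{A}}$, so that $h_{0}(t)=t-a_{0}$ and the whole domain $[a_{0},a_{n}]$ move as $a_{0}\to a_{1}$; here \emph{every} basis function, not merely $\beta_{a_{0}}$, varies with the parameter. I would resolve this by noting that joint continuity of $(a_{0},a)\mapsto(t-a_{0})^{a-a_{0}}(a_{n}-t)^{a_{n}-a}$ still forces each $\beta_{a_{i}}(t)$ to converge to the corresponding GT-Bernstein basis function over the shifted set $\{a_{1},\dots,a_{n}\}$, so the same collapse occurs and (\ref{equ:303}) holds with its empty leading sum. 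A secondary delicate point is the behaviour at the two endpoints $t=a_{0},a_{n}$, where the $0^{0}$ convention enters; there the identity follows instead from the endpoint interpolation property shared by both the degree-$n$ and the degree-$(n-1)$ curves.
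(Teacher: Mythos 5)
Your proposal is correct and follows essentially the same route as the paper's own proof: pass to the limit termwise using continuity of $a\mapsto(t-a_{0})^{a-a_{0}}(a_{n}-t)^{a_{n}-a}$, merge the two coalescing terms in numerator and denominator, and identify $\widetilde{\omega}_{a_{k+1}}$ and $\widetilde{\mathbf{b}}_{a_{k+1}}$. Your explicit treatment of the $k=0$ boundary case and of the endpoints $t=a_{0},a_{n}$ is a welcome addition that the paper's proof silently glosses over, but it does not change the substance of the argument.
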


  \begin{proof}
      When $a_{k}~(0\leq k<n)$ tends to $a_{k+1}$, we have
      \begin{equation*}
        \lim_{a_{k}\rightarrow a_{k+1}}\!\!\beta_{a_{k}}(t)\!=\!\!\lim_{a_{k}\rightarrow a_{k+1}}(t-a_{0})^{a_{k}-a_{0}}(a_{n}-t)^{a_{n}-a_{k}}=(t-a_{0})^{a_{k+1}-a_{0}}(a_{n}-t)^{a_{n}-a_{k+1}}\!\!=\!\beta_{a_{k+1}}(t)
      \end{equation*}
      Thus,
      \begin{eqnarray*}
         \lim_{a_{k}\rightarrow a_{k+1}}\sum_{i=0}^{n}\mathbf{b}_{a_{i}}T_{a_{i}}(t)&\!\!=\!\!& \widetilde{\mathbf{P}}_{\widetilde{\mathcal {A}},\widetilde{\omega},\widetilde{\mathcal{B}}}(t)\\
         &\!\!=\!\!& \lim_{a_{k}\!\rightarrow a_{k\!+\!1}}\sum_{i\!=\!0}^{n}\frac{\mathbf{b}_{a_{i}}\omega_{a_{i}}\beta_{a_{i}}(t)}{\sum^{n}_{i\!=\!0}\omega_{a_{i}}\beta_{a_{i}}(t)}\\
         &\!\!=\!\!& \frac{\sum_{i\neq k,k+1}\!\omega_{a_{i}}\!\mathbf{b}_{a_{i}}\beta_{a_{i}}\!(t)\!+\!\omega_{a_{k}}\!\mathbf{b}_{a_{k}}\beta_{a_{k\!+\!1}}\!(t)\!+\!\omega_{a_{k\!+\!1}}\!\mathbf{b}_{a_{k\!+\!1}}\beta_{a_{k\!+\!1}}\!(t)}{\sum_{i\neq k,k+1}\!\omega_{a_{i}}\beta_{a_{i}}\!(t)\!+\!\omega_{a_{k}}\beta_{a_{k\!+\!1}}\!(t)\!+\!\omega_{a_{k\!+\!1}}\beta_{a_{k\!+\!1}}\!(t)}\\
         &\!\!=\!\!& \frac{\sum_{i\neq k,k+1}\!\omega_{a_{i}}\!\mathbf{b}_{a_{i}}\beta_{a_{i}}\!(t)\!+\!(\omega_{a_{k}}\mathbf{b}_{a_{k}}\!+\!\omega_{a_{k+1}}\!\mathbf{b}_{a_{k+1}})\beta_{a_{k+1}}\!(t)\!}{\sum_{i\neq k,k+1}\!\omega_{a_{i}}\beta_{a_{i}}\!(t)\!+\!(\omega_{a_{k}}\!+\!\omega_{a_{k+1}})\beta_{a_{k+1}}\!(t)\!}.
      \end{eqnarray*}

      Let $\widetilde{\omega}_{a_{k+1}}=\omega_{a_{k}}+\omega_{a_{k+1}}, \widetilde{\mathbf{b}}_{a_{k+1}}=\frac{\omega_{a_{k}}}{\omega_{a_{k}}+\omega_{a_{k+1}}}\mathbf{b}_{a_{k}}+\frac{\omega_{a_{k+1}}}{\omega_{a_{k}}+\omega_{a_{k+1}}}\mathbf{b}_{a_{k+1}}$,
       we can obtain
      \begin{equation*}
        \lim_{a_{k}\rightarrow a_{k+1}}\mathbf{P}_{\mathcal {A},\omega,\mathcal{B}}(t)\! = \! \widetilde{\mathbf{P}}_{\widetilde{\mathcal {A}},\widetilde{\omega},\widetilde{\mathcal{B}}}(t) \!=\!\frac{\sum_{i\neq k,k+1}\!\omega_{a_{i}}\!\mathbf{b}_{a_{i}}\beta_{a_{i}}\!(t)\!+\!\widetilde{\omega}_{a_{k\!+\!1}}\widetilde{\mathbf{b}}_{a_{k\!+\!1}}\beta_{a_{k\!+\!1}}(t)}{\sum_{i\neq k,k+1}\!\omega_{a_{i}}\beta_{a_{i}}\!(t)\!+\!\widetilde{\omega}_{a_{k+1}}\beta_{a_{k+1}}(t)},
      \end{equation*}
  where $\widetilde{\mathcal {A}}=\{a_{0},\cdots,a_{k-1},a_{k+1},\cdots,a_{n}\},\widetilde{\mathcal{B}}=\{\mathbf{b}_{a_{0}},\cdots,\mathbf{b}_{a_{k-1}},\widetilde{\mathbf{b}}_{a_{k+1}},\mathbf{b}_{a_{k+2}},\cdots,\mathbf{b}_{a_{n}}\}$ and $\widetilde{\omega}=\{\omega_{a_{0}},\cdots,\omega_{a_{k-1}},\widetilde{\omega}_{a_{k+1}},\omega_{a_{k+2}},\cdots,\omega_{a_{n}}\}$. This leads to prove the result.
  \end{proof}

  \begin{example}\label{exa:302}
  Consider the curve $\mathbf{P}_{\mathcal {A},\omega,\mathcal{B}}(t)$ defined as in Example \ref{exa:301}. Let knots $\mathcal {A}=\{a_{0}\!=\!0,a_{1}\!=\!\frac{\sqrt{2}}{4},a_{2}\!=\!\frac{1}{2},a_{3}\!=\!\frac{\sqrt{2}}{2},a_{4}\!=\!1\}$, weights $\omega_{a_{0}}\!=\!1,\omega_{a_{1}}\!=\!10,\omega_{a_{2}}\!=\!20,\omega_{a_{3}}\!=\!6,\omega_{a_{4}}\!=\!5$, control points $\mathbf{b}_{a_{0}}\!=\!(0,0),\mathbf{b}_{a_{1}}\!=\!(0.4,1.3),\mathbf{b}_{a_{2}}\!=\!(2,2),\mathbf{b}_{a_{3}}\!=\!(3.7,1.5),\mathbf{b}_{a_{4}}\!=\!(4,0)$ and $c_{a_{i}}\!=\!1~(i\!=\!0,\cdots,4)$. If $a_{1}$ approaches $a_{2}$, then the changes of the GT-B\'ezier curve are shown in Fig.~\ref{fig:302}. We can see that the limit curve $ \lim_{a_{1}\rightarrow a_{2}}\mathbf{P}_{\mathcal {A},\omega,\mathcal{B}}(t)$ coincides with the target curve $\widetilde{\mathbf{P}}_{\widetilde{\mathcal {A}},\widetilde{\omega},\widetilde{\mathcal{B}}}(t)$, which verifies the Theorem \ref{the:301}.

  \begin{figure}[h!]
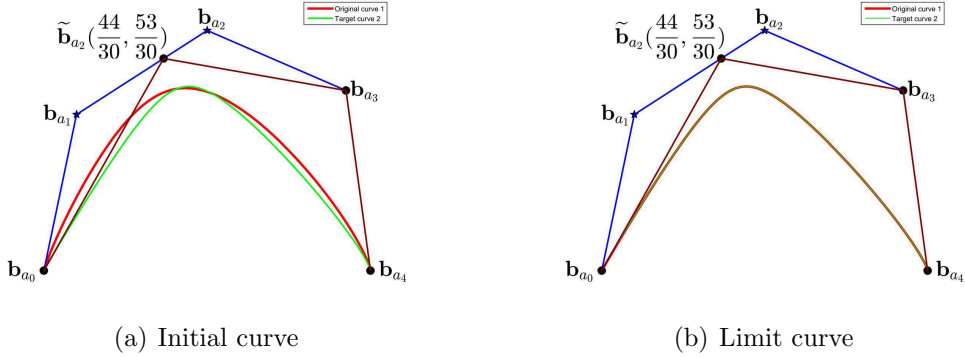

  \centering
  \subfigure[Initial curve]{
    \includegraphics[width=6cm]{tuc51.eps}}
  \hspace{1cm}
  \subfigure[Limit curve]{
    \includegraphics[width=6cm]{tuc52.eps}}
  \caption{Limits of the quadratic GT-B\'ezier curve of single knot}
  \label{fig:302}
  \end{figure}
\end{example}

Theorem \ref{the:301} indicates that the GT-B\'ezier curve of degree $n$ degenerates into the GT-B\'ezier curve of degree $n-1$ with knots $\widetilde{\mathcal {A}}\!=\!\{a_{0},\!\cdots\!,a_{k-1},a_{k+1},\!\cdots\!,a_{n}\}$, control points $\widetilde{\mathcal{B}}\!=\!\{\mathbf{b}_{a_{0}},\!\cdots\!,\mathbf{b}_{a_{k-1}},\widetilde{\mathbf{b}}_{a_{k+1}},\mathbf{b}_{a_{k+2}},\!\cdots\!,\mathbf{b}_{a_{n}}\}$ and weights
  $\widetilde{\omega}\!=\!\{\omega_{a_{0}},\!\cdots\!,\omega_{a_{k-1}},\widetilde{\omega}_{a_{k+1}},\omega_{a_{k+2}},\!\cdots\!,\omega_{a_{n}}\}$ when $a_{k}=a_{k+1}$. The following corollary generalizes Theorem \ref{the:301}, and gives the limit of GT-B\'ezier curve with multiple knots. The proof of the corollary is similar to Theorem \ref{the:301} and will be omitted here.

  \begin{corollary}\label{cor:301}
  Suppose $c_{a_{i}}=1~(i=0,\cdots,n)$. When knots $a_{q},a_{q+1},\cdots,a_{q+k-2}(0\leq q<n,1<k\leq n+1-q)$ approaches to the knot $a_{q+k-1}$, the limit of GT-B\'ezier curve $\mathbf{P}_{\mathcal {A},\omega,\mathcal{B}}(t)$ of degree $n$ defined in (\ref{equ:301}) is exactly the GT-B\'ezier curve of degree $n\!-\!k\!+\!1$ as
  \begin{eqnarray*}
    \lim_{a_{q},\!\cdots\!,a_{q\!+\!k\!-\!2}\!\rightarrow\! a_{q\!+\!k\!-\!1}}\!\!\!\!\!\!\!\mathbf{P}_{\mathcal {A},\omega,\mathcal{B}}(t)&\!\! = \!\!& \widetilde{\mathbf{P}}_{\widetilde{\mathcal {A}},\widetilde{\omega},\widetilde{\mathcal{B}}}(t)\\
     &\!\!=\!\!&  \frac{\sum_{i=0}^{q\!-\!1}\omega_{a_{i}}\!\mathbf{b}_{a_{i}}\beta_{a_{i}}\!(t)\!+\!\widetilde{\omega}_{a_{q\!+\!k\!-\!1}}\!\widetilde{\mathbf{b}}_{a_{q\!+\!k\!-\!1}}\beta_{a_{q\!+\!k\!-\!1}}\!(t)\!+\!\sum_{i=q\!+\!k}^{n}\omega_{a_{i}}\mathbf{b}_{a_{i}}\beta_{a_{i}}\!(t)\!}{\sum_{i=0}^{q\!-\!1}\omega_{a_{i}}\beta_{a_{i}}\!(t)\!+\!\widetilde{\omega}_{a_{q\!+\!k\!-\!1}}\beta_{a_{q\!+\!k\!-\!1}}\!(t)\!\!+\!\!\sum_{i=q\!+\!k}^{n}\omega_{a_{i}}\beta_{a_{i}}\!(t)\!},
  \end{eqnarray*}
  where $\widetilde{\omega}_{a_{q+k-1}}\!\!=\!\!\omega_{a_{q}}+\omega_{a_{q+1}}+\cdots+\omega_{a_{q+k-1}}, \widetilde{\mathbf{b}}_{a_{q+k-1}}\!\!=\!\!\frac{\omega_{a_{q}}}{\widetilde{\omega}_{a_{q+k-1}}}\mathbf{b}_{a_{q}}\!+\cdots+\frac{\omega_{a_{q+k-1}}}{\widetilde{\omega}_{a_{q+k-1}}}\mathbf{b}_{a_{q+k-1}}$, $\widetilde{\mathcal {A}}=\{a_{0},\cdots,a_{q-1},a_{q+k-1},\cdots,a_{n}\}$, $\widetilde{\mathcal{B}}\!\!=\!\!\{\mathbf{b}_{a_{0}},\cdots,\mathbf{b}_{a_{q-1}},\widetilde{\mathbf{b}}_{a_{q+k-1}},\mathbf{b}_{a_{q+k}},\cdots,\mathbf{b}_{a_{n}}\}$ and $\widetilde{\omega}\!\!=\!\!\{\omega_{a_{0}},\cdots,\omega_{a_{q-1}},\widetilde{\omega}_{a_{q+k-1}},\omega_{a_{q+k}},\cdots,\omega_{a_{n}}\}$.
  \end{corollary}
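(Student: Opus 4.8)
The plan is to mimic the proof of Theorem~\ref{the:301}, the only difference being that $k$ knots collapse onto a single one rather than just two. First I would record, under the standing assumptions $c_{a_i}=1$ and $k_0=k_1=1$, the explicit form $\beta_{a_i}(t)=(t-a_0)^{a_i-a_0}(a_n-t)^{a_n-a_i}$, and observe that for each fixed $t\in(a_0,a_n)$ both factors $t-a_0$ and $a_n-t$ are strictly positive, so $\beta_{a_i}(t)$ depends continuously on the knot $a_i$. Consequently, letting the $k-1$ knots $a_q,\dots,a_{q+k-2}$ tend simultaneously to $a_{q+k-1}$ forces every one of the $k$ basis functions indexed by $q,\dots,q+k-1$ to the common limit
\[
\beta_{a_i}(t)\longrightarrow (t-a_0)^{a_{q+k-1}-a_0}(a_n-t)^{a_n-a_{q+k-1}}=\beta_{a_{q+k-1}}(t),\qquad i=q,\dots,q+k-1 .
\]

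Next I would split the numerator and denominator sums in (\ref{equ:301}) into the collapsing block of indices $\{q,\dots,q+k-1\}$ and the remaining indices, the latter being unaffected by the limit. On the collapsing block I factor out the common limit $\beta_{a_{q+k-1}}(t)$, so that the numerator contributes $\bigl(\sum_{i=q}^{q+k-1}\omega_{a_i}\mathbf{b}_{a_i}\bigr)\beta_{a_{q+k-1}}(t)$ and the denominator contributes $\bigl(\sum_{i=q}^{q+k-1}\omega_{a_i}\bigr)\beta_{a_{q+k-1}}(t)$. Introducing $\widetilde{\omega}_{a_{q+k-1}}=\sum_{i=q}^{q+k-1}\omega_{a_i}$ and $\widetilde{\mathbf{b}}_{a_{q+k-1}}=\widetilde{\omega}_{a_{q+k-1}}^{-1}\sum_{i=q}^{q+k-1}\omega_{a_i}\mathbf{b}_{a_i}$ rewrites the limit exactly as the claimed rational expression, which by Definition~\ref{def:301} is the GT-B\'ezier curve $\widetilde{\mathbf{P}}_{\widetilde{\mathcal {A}},\widetilde{\omega},\widetilde{\mathcal{B}}}(t)$ of degree $n-k+1$ over the reduced data $\widetilde{\mathcal {A}},\widetilde{\mathcal{B}},\widetilde{\omega}$.

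The one point that needs care is passing the limit through the quotient: this is legitimate because the limiting denominator is a sum of strictly positive terms and hence bounded away from $0$ for $t\in(a_0,a_n)$, so the quotient is continuous there and the limit equals the quotient of the limits; the endpoints $t=a_0,a_n$ are handled separately by the endpoint interpolation property, which is preserved under the limit. I expect this nonvanishing check, together with cleanly organizing the simultaneous multivariable limit, to be the only mild obstacle. Alternatively, one may avoid the simultaneous limit entirely and argue by induction on $k$, taking Theorem~\ref{the:301} as the base case $k=2$ and merging one knot at a time, each merge being a direct instance of Theorem~\ref{the:301} applied to the already-reduced curve; the weight sums and the barycentric control-point formula then telescope into the stated $\widetilde{\omega}_{a_{q+k-1}}$ and $\widetilde{\mathbf{b}}_{a_{q+k-1}}$.
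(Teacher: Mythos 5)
Your proposal is correct and follows essentially the same route the paper intends: the paper omits the proof, stating it is "similar to Theorem \ref{the:301}," and your argument is precisely that proof with the two-term collapsing block replaced by a $k$-term block, the common limit $\beta_{a_{q+k-1}}(t)$ factored out, and the merged weight and barycentric control point read off. The inductive alternative you sketch is also valid and arguably cleaner, but it is not needed.
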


\begin{example}\label{exa:303}
   Consider the curve $\mathbf{P}_{\mathcal {A},\omega,\mathcal{B}}(t)$ defined as in Example \ref{exa:301}. If $a_{1}\rightarrow a_{2}$ and $a_{3}\rightarrow a_{2}$, then the changes of the GT-B\'ezier curve are shown in Fig.~\ref{fig:303}. The limit curve is constructed by knots $\widetilde{\mathcal {A}}=\{a_{0}=0,a_{2}=\frac{1}{2},a_{4}=1\}$, control points  $\widetilde{\mathcal{B}}=\{\mathbf{b}_{a_{0}}=(0,0),\widetilde{\mathbf{b}}_{a_{2}}=(\frac{66.2}{36},\frac{62}{36}),\mathbf{b}_{a_{4}}=(4,0)\}$ and weights
  $\widetilde{\omega}=\{\omega_{a_{0}}=1,\widetilde{\omega}_{a_{2}}=36,\omega_{a_{4}}=5\}$. We can see that the limit curve coincides with the target curve together, which verifies the result of Corollary \ref{cor:301}.

  \begin{figure}[h!]
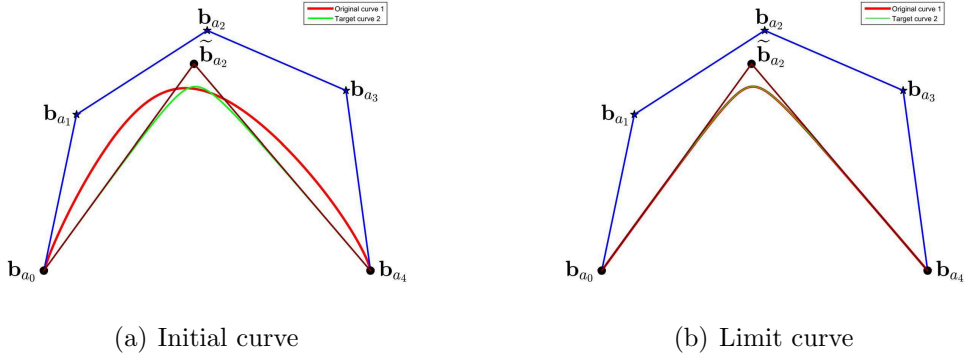

  \centering
  \subfigure[Initial curve]{
    \includegraphics[width=6cm]{tuc61.eps}}
   \hspace{1cm}
  \subfigure[Limit curve]{
    \includegraphics[width=6cm]{tuc62.eps}}
  \caption{Limits of the quadratic GT-B\'ezier curve with multiple knots }
  \label{fig:303}
  \end{figure}
\end{example}

\item[(g)] {\bf{Toric degeneration property}}. For each $t\in\Delta_{\mathcal {A}}$, we have the limiting property of GT-B\'ezier curve while a single weight of curve tends to infinity, that is
  \begin{equation*}
  \lim_{\omega_{a_{i}}\rightarrow +\infty}\mathbf{P}_{\mathcal {A},\omega,\mathcal{B}}(t)=
    \left\{
             \begin{array}{ll}
             \mathbf{b}_{a_{0}}&  t=a_{0},\\
             \mathbf{b}_{a_{i}}&  t\in(a_{0},a_{n}),\\
             \mathbf{b}_{a_{n}}&  t=a_{n}.
             \end{array}
        \right.
  \end{equation*}
  And this property can be derived from weight property of rational B\'ezier directly. Fig.~\ref{fig:304} shows the limit curve of GT-B\'ezier curve defined in Example \ref{exa:301} with $\omega_{a_{1}}\rightarrow +\infty$.
\begin{figure}[h!]
  \begin{center}
  \includegraphics[width=7cm]{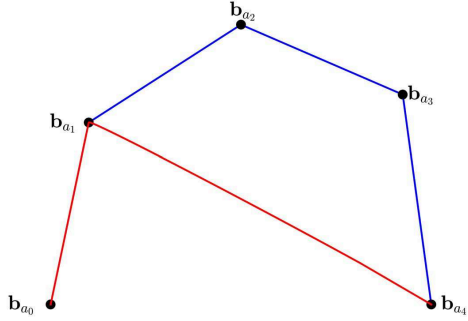}
  \caption{Limit of GT-B\'ezier curve with $\omega_{a_{1}}\rightarrow +\infty$.}
  \label{fig:304}
  \end{center}
\end{figure}

Next, we consider the property of GT-B\'ezier curve if all the weights tend to infinity.

Let $\lambda :\mathcal{A}\rightarrow \mathbb{R}$ be a lifting function to lift the points $a_{i}$ of $\mathcal{A}$ to $(a_{i},\lambda(a_{i}))\in \mathbb{R}^{2}$. We denote $P_{\lambda}=conv\{(a_{i},\lambda(a_{i}))\mid a_{i}\in \mathcal{A}\}$ the convex hull of the lifted points. Each edge of the convex hull $P_{\lambda}$ has a normal vector pointing to the outer side. We call it the upper edges of $P_{\lambda}$ if the last coordinate of the normal vector is positive. If we project these upper edges back vertically into $\mathbb{R}$, they can cover $\Delta_{\mathcal{A}}$ and form a regular subdivision $\Gamma_{\lambda}$ of $\Delta_{\mathcal{A}}$ induced by $\lambda$ \cite{ref13}.

  We group together the points of $\mathcal{A}$ that are in the same subset of the $\Gamma_{\lambda}$ and on the same upper edge of the $P_{\lambda}$. Then we get a decomposition of $\mathcal{A}$, which is called regular decomposition $\mathcal{S}_{\lambda}$ of $\mathcal{A}$ induced by $\lambda$. For each subset $\mathcal{F}$ of $\mathcal{S}_{\lambda}$, we can use the weights $\omega|_{\mathcal{F}}=\{\omega_{a_{i}}\mid a_{i}\in\mathcal{F}\}$ and the control points $\mathcal{B}|_{\mathcal{F}}=\{\mathbf{b}_{a_{i}}\mid a_{i}\in\mathcal{F}\}$ to define a new GT-B\'ezier curve $\mathbf{P}_{\mathcal{F},\omega|_{\mathcal{F}},\mathcal{B}|_{\mathcal{F}}}$ on $\Delta_{\mathcal{F}}=conv\{a_{i}\in\mathcal{F}\mid a_{i}\in\mathcal{A}\}$ by Definition \ref{def:301}. The union of these curves
  \begin{equation*}
    \mathbf{P}_{\mathcal {A},\omega,\mathcal{B}}(\mathcal{S}_{\lambda})=\bigcup_{\mathcal{F}\in \mathcal{S}_{\lambda}}\mathbf{P}_{\mathcal{F},\omega|_{\mathcal{F}},\mathcal{B}|_{\mathcal{F}}}
  \end{equation*}
   is called the regular control curve of $\mathbf{P}_{\mathcal {A},\omega,\mathcal{B}}$ induced by regular decomposition $\mathcal{S}_{\lambda}$.

  We can use lifting function $\lambda$ to get a set of weights with a parameter $x$, $\omega_{\lambda}(x):=\{x^{\lambda(a_{i})}\omega_{a_{i}}\mid a_{i}\in\mathcal{A}\}$. These weights are used to define the map 
  \begin{equation}\label{equ:304}
    \mathbf{P}_{\mathcal {A},\omega_{\lambda}(x),\mathcal{B}}(t)=\frac{\sum_{i=0}^{n}x^{\lambda(a_{i})}\omega_{a_{i}}\mathbf{b}_{a_{i}}\beta_{a_{i}}(t)}{\sum^{n}_{i=0}x^{{\lambda(a_{i})}}\omega_{a_{i}}\beta_{a_{i}}(t)},~~~~t\in\Delta_{\mathcal {A}}.
  \end{equation}
  The image of $\Delta_{\mathcal{A}}$ under this map is a GT-B\'ezier curve with a parameter $x$, denoted as $P_{\mathcal {A},\omega_{\lambda}(x),\mathcal{B}}$.
  We have the following result.

  \begin{theorem}\label{the:302}
  The limit of the GT-B\'ezier curve $\mathbf{P}_{\mathcal {A},\omega_{\lambda}(x),\mathcal{B}}$ as $x\rightarrow \infty$ is the regular control curve induced by regular decomposition $\mathcal{S}_{\lambda}$, that is
  \begin{equation*}
    \lim_{x\rightarrow \infty}\mathbf{P}_{\mathcal {A},\omega_{\lambda}(x),\mathcal{B}}=\mathbf{P}_{\mathcal {A},\omega,\mathcal{B}}(\mathcal{S}_{\lambda}).
  \end{equation*}
  \end{theorem}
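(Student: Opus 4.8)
The plan is to reduce the degeneration to a purely combinatorial statement about which lifted points dominate, by passing to logarithmic coordinates. With $c_{a_{i}}=1$ and $k_{0}=k_{1}=1$ one has $\beta_{a_{i}}(t)=(t-a_{0})^{a_{i}-a_{0}}(a_{n}-t)^{a_{n}-a_{i}}$, so after removing the common factor $(t-a_{0})^{-a_{0}}(a_{n}-t)^{a_{n}}$ from numerator and denominator of (\ref{equ:304}), each summand is proportional to $\omega_{a_{i}}\exp\bigl(\lambda(a_{i})L+a_{i}\tau\bigr)$, where $L=\log x$ and $\tau=\log\frac{t-a_{0}}{a_{n}-t}\in(-\infty,\infty)$. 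Thus $\mathbf{P}_{\mathcal{A},\omega_{\lambda}(x),\mathcal{B}}(t)$ is a weighted average of the control points $\mathbf{b}_{a_{i}}$ whose coefficients $\theta_{i}$ are a softmax of the values $v_{i}=\langle(\tau,L),(a_{i},\lambda(a_{i}))\rangle$, a linear functional evaluated at the lifted points. The whole proof rests on the observation that as $L\to\infty$ this average concentrates on the face of $P_{\lambda}$ selected by the direction $(\tau,L)$, and for $L$ large the admissible directions point upward and sweep out exactly the upper faces that define $\Gamma_{\lambda}$ and $\mathcal{S}_{\lambda}$.

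First I would prove the inclusion $\mathbf{P}_{\mathcal{A},\omega,\mathcal{B}}(\mathcal{S}_{\lambda})\subseteq\lim$. Fix a cell $\mathcal{F}\in\mathcal{S}_{\lambda}$ lying on an upper edge whose supporting line has slope $s$ and intercept $c$, so that $\lambda(a_{i})-sa_{i}=c$ for $a_{i}\in\mathcal{F}$ and $\lambda(a_{i})-sa_{i}<c$ otherwise. Substituting $\tau=-sL+\tau'$ with $\tau'\in\mathbb{R}$ fixed and factoring out $e^{Lc}$, every summand with $a_{i}\notin\mathcal{F}$ carries a factor $e^{L(\lambda(a_{i})-sa_{i}-c)}\to 0$, while the summands with $a_{i}\in\mathcal{F}$ reduce to $\omega_{a_{i}}e^{a_{i}\tau'}$. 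Hence the limit equals $\bigl(\sum_{a_{i}\in\mathcal{F}}\omega_{a_{i}}\mathbf{b}_{a_{i}}e^{a_{i}\tau'}\bigr)\big/\bigl(\sum_{a_{i}\in\mathcal{F}}\omega_{a_{i}}e^{a_{i}\tau'}\bigr)$, which, after the same logarithmic change of variables applied on the subinterval $\Delta_{\mathcal{F}}$, is precisely the point of $\mathbf{P}_{\mathcal{F},\omega|_{\mathcal{F}},\mathcal{B}|_{\mathcal{F}}}$ with parameter determined by $\tau'$. Letting $\tau'$ range over $\mathbb{R}$ and $\mathcal{F}$ over $\mathcal{S}_{\lambda}$ recovers every interior point of the regular control curve; the breakpoints are handled by the endpoint-interpolation property of GT-B\'ezier curves established in Section~\ref{sec:3}, which guarantees that adjacent sub-curves share the corresponding control point.

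Next I would prove the reverse inclusion by a uniform dominance argument. For an arbitrary parameter $t=t(x)$ the limit point, if it exists, is a convex combination of the $\mathbf{b}_{a_{i}}$ supported only on the maximizers of $v_{i}$. Analysing the ratio $\tau/L$ shows these maximizers always lie on an upper face of $P_{\lambda}$: if $\tau/L\to -s$ for an edge-slope $s$ of the upper hull the maximizing set is the whole corresponding cell $\mathcal{F}$, and the residual offset $\tau'=\tau+sL$ places the limit on $\mathbf{P}_{\mathcal{F},\omega|_{\mathcal{F}},\mathcal{B}|_{\mathcal{F}}}$ as in the previous paragraph; if $\tau/L$ tends to a value that is not the negative of any edge-slope, including $\pm\infty$, the maximizer is a single vertex of the upper hull and the limit is the associated control point, which is an endpoint of the adjacent sub-curves and hence again on the control curve. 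In every case the limit lies on $\mathbf{P}_{\mathcal{A},\omega,\mathcal{B}}(\mathcal{S}_{\lambda})$. Combining the two inclusions with compactness of $\Delta_{\mathcal{A}}$ upgrades these pointwise statements to Hausdorff convergence of the image sets, which is the assertion of the theorem.

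Finally, a word on where the difficulty concentrates. The inclusion of Paragraph~2 is a one-cell computation and is routine once the logarithmic substitution is in place. The genuine obstacle is the reverse inclusion: one must control $\mathbf{P}_{\mathcal{A},\omega_{\lambda}(x),\mathcal{B}}(t)$ simultaneously for all $t\in\Delta_{\mathcal{A}}$ as $x\to\infty$, so that no spurious limit point appears off the control curve and the transition regions between cells, where the softmax interpolates between two faces, collapse onto the shared breakpoints. Making the dominance estimate uniform in $t$, equivalently in the direction $(\tau,L)$, and verifying that the selected faces are exactly the upper faces encoded by $\mathcal{S}_{\lambda}$ is the technical heart of the proof; this is precisely the point at which the regular-subdivision structure of $\lambda$ and its agreement with the toric-degeneration picture of \cite{ref13,ref16} become indispensable.
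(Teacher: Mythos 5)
Your proposal is correct, but it takes a genuinely different route from the paper. The paper's proof of Theorem~\ref{the:302} is essentially a citation: it factors $\mathbf{P}_{\mathcal {A},\omega,\mathcal{B}}$ as a composition $\mathcal {A}\rightarrow X_{\mathcal {A}}\rightarrow X_{\mathcal {A},\omega_{\lambda}(x)}\rightarrow \mathbf{P}_{\mathcal {A},\omega_{\lambda}(x),\mathcal{B}}$ and invokes the theorem of Postinghel--Sottile--Villamizar~\cite{ref16} that the translated irrational toric varieties $X_{\mathcal {A},\omega_{\lambda}(x)}$ converge in Hausdorff distance to $\bigcup_{\mathcal{F}\in\mathcal{S}_{\lambda}}X_{\mathcal{F},\omega|_{\mathcal{F}}}$, then pushes this limit through the (continuous) projection determined by $\mathcal{B}$. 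You instead give a self-contained analytic argument: the logarithmic substitution $L=\log x$, $\tau=\log\frac{t-a_{0}}{a_{n}-t}$ turns the blending coefficients into a softmax of the linear functional $(\tau,L)\mapsto a_{i}\tau+\lambda(a_{i})L$ on the lifted points, concentration occurs on the face of $P_{\lambda}$ selected by the direction $(\tau,L)$, and since $L>0$ these are exactly the upper faces defining $\mathcal{S}_{\lambda}$; your two-inclusion structure (one-cell computation for $\supseteq$, uniform dominance over all directions for $\subseteq$, vertices of the upper hull handling the breakpoints via endpoint interpolation) is sound and in effect re-proves the one-dimensional case of the result in~\cite{ref16}. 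What each approach buys: yours is elementary, makes the mechanism and the location of the real difficulty (uniformity in $t$ needed to upgrade pointwise limits to Hausdorff convergence) completely explicit, but obliges you to carry out those estimates yourself; the paper's is two lines and transfers verbatim to the surface case (Theorem~\ref{the:402}), at the cost of outsourcing all the content to an external theorem. Your identification of the maximizing faces with the cells of $\mathcal{S}_{\lambda}$ and of the single-vertex directions with the shared breakpoints is exactly the combinatorics underlying the cited degeneration result, so the two proofs are consistent.
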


  \begin{proof}
  According to the theory of real irrational toric varieties in~\cite{ref16}, the GT-B\'ezier curve $\mathbf{P}_{\mathcal {A},\omega,\mathcal{B}}$ is obtained by the projection of the high-dimensional real projective toric variety formed by $\mathcal {A}$. Then $\mathbf{P}_{\mathcal {A},\omega,\mathcal{B}}$ is projection after the composition of a sequence of mappings
   \begin{equation*}
     \mathcal {A}\stackrel{\{\beta_{a_{i}}\mid a_{i}\in\mathcal {A}\}}{\longrightarrow}X_{\mathcal {A}}\stackrel{\omega}{\longrightarrow}X_{\mathcal {A},\omega}\stackrel{\mathcal{B}}{\longrightarrow}\mathbf{P}_{\mathcal {A},\omega,\mathcal{B}}.
   \end{equation*}

   For $\mathcal {A}$ and weights $\omega_{\lambda}(x)$ with parameter $x$, we can get a family of translated toric varieties $X_{\mathcal {A},\omega_{\lambda}(x)}$
   \begin{equation*}
     \mathcal {A}\stackrel{\{\beta_{a_{i}}\mid a_{i}\in\mathcal {A}\}}{\longrightarrow}X_{\mathcal {A}}\stackrel{\omega_{\lambda}(x)}{\longrightarrow}X_{\mathcal {A},\omega_{\lambda}(x)}.
   \end{equation*}

  When $x\rightarrow \infty$, $X_{\mathcal {A},\omega_{\lambda}(x)}$ limits to a union of irrational toric varieties in the Hausdorff distance, which are defined by the all of subset of $\mathcal{S}_{\lambda}$. That is
  \begin{equation*}
    \lim_{x\rightarrow \infty}X_{\mathcal {A},\omega_{\lambda}(x)}=\bigcup_{\mathcal{F}\in \mathcal{S}_{\lambda}}X_{\mathcal{F},\omega|_{\mathcal{F}}}.
  \end{equation*}

  Then add control points $\mathcal{B}$, we have
  \begin{equation*}
    \bigcup_{\mathcal{F}\in \mathcal{S}_{\lambda}}X_{\mathcal{F},\omega|_{\mathcal{F}}}\stackrel{\mathcal{B}}{\longrightarrow}\bigcup_{\mathcal{F}\in \mathcal{S}_{\lambda}}\mathbf{P}_{\mathcal{F},\omega|_{\mathcal{F}},\mathcal{B}|_{\mathcal{F}}}=\mathbf{P}_{\mathcal {A},\omega,\mathcal{B}}(\mathcal{S}_{\lambda})
  \end{equation*}
  So the result holds.
  \end{proof}

  Theorem~\ref{the:302} shows that regular control curves are exactly the limits of the GT-B\'ezier curve when all the weights tend to infinity. Obviously the control polygon is the regular control curve of GT-B\'ezier curve. This property is also called toric degeneration of GT-B\'ezier curves.

\begin{example}\label{exa:304}
  Let $\mathcal {A}=\{0,\frac{\sqrt{2}}{4},\frac{1}{2},\frac{\sqrt{2}}{2},1\}$, and the lifted values of $\mathcal {A}$ by a lifting function $\lambda$ be $(2,1,5,9-4\sqrt{2},1)$. This induces a regular decomposition of $\mathcal {A}$ as
  \begin{equation*}
    \left\{\{0,\frac{1}{2}\}~,~\{\frac{1}{2},\frac{\sqrt{2}}{2},1\}\right\}.
  \end{equation*}
  The lifted point $\frac{\sqrt{2}}{4}$ doesn't lie on any upper edge of the lifting polygon $P_{\lambda}$,then it doesn't lie on any subset of the decomposition.

  Fig.~\ref{fig305:subfig:a} shows $\mathcal {A}$, the lifted values of $\mathcal {A}$ by $\lambda$, and the corresponding regular decomposition. Fig.~\ref{fig305:subfig:b},\ref{fig305:subfig:c},\ref{fig305:subfig:d} show the toric degeneration of this GT-B\'ezier curve for $x=1.3$, $x=2$, and $x=3$. The GT-B\'ezier curve approaches its regular control curve as the parameter $x$ becomes larger.

  If $\lambda^{'}$ takes the values of $\mathcal {A}$ as $\{0,2.5,3,2.5,0\}$, then this induces a regular decomposition of $\mathcal {A}$ as
  \begin{equation*}
    \left\{\{0,\frac{\sqrt{2}}{4}\}~,~\{\frac{\sqrt{2}}{4},\frac{1}{2}\}~,~\{\frac{1}{2},\frac{\sqrt{2}}{2}\}~,~\{\frac{\sqrt{2}}{2},1\}\right\}.
  \end{equation*}
  The corresponding regular decomposition is shown in Fig.~\ref{fig306:subfig:a} and the regular control curve is exactly the control polygon of the curve.

 Moreover $\lambda^{''}$ takes the values of $\mathcal {A}$ as $\{1,3,1,0,1\}$, then the regular decomposition of $\mathcal {A}$ is $\left\{\!\{\!0,\!\!\frac{\sqrt{2}}{4}\},\!\{\!\frac{\sqrt{2}}{4}\!,\!1\}\!\right\}$ (see Fig.~\ref{fig306:subfig:b}) and the regular control curve is as shown in  Fig.~\ref{fig:304}.
\begin{figure}[h!]
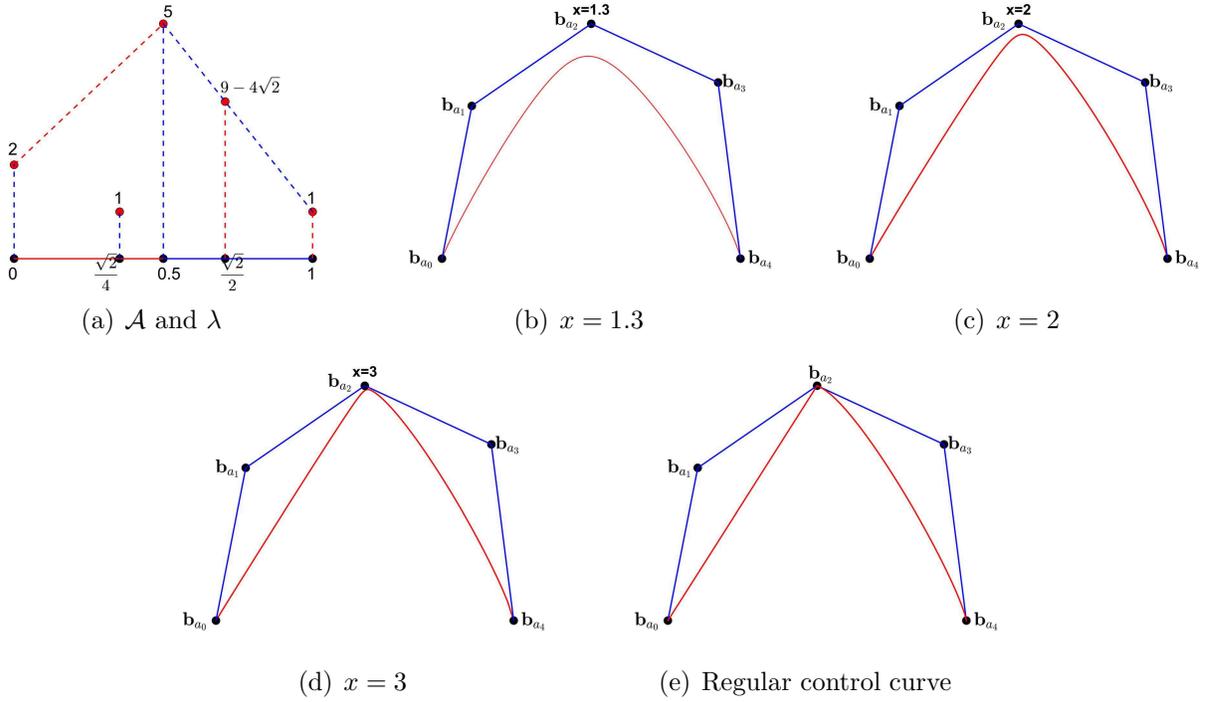

\centering
  \subfigure[$\mathcal{A}$ and $\lambda$]{
    \label{fig305:subfig:a} 
    \includegraphics[width=5.1cm]{tu81.eps}}
  \hspace{0.2cm}
  \subfigure[$x=1.3$]{
    \label{fig305:subfig:b} 
    \includegraphics[width=5.1cm]{tu84.eps}}
  \hspace{0.2cm}
  \subfigure[$x=2$]{
    \label{fig305:subfig:c} 
    \includegraphics[width=5.1cm]{tu83.eps}}\\
\centering
  \subfigure[$x=3$]{
    \label{fig305:subfig:d} 
    \includegraphics[width=5.1cm]{tu85.eps}}
  \hspace{0.5cm}
  \subfigure[Regular control curve]{
    \label{fig305:subfig:e} 
    \includegraphics[width=5.1cm]{tu86.eps}}
  \caption{Toric degeneration of GT-B\'ezier curve}
  \label{fig:305}
\end{figure}
\begin{figure}[h!]
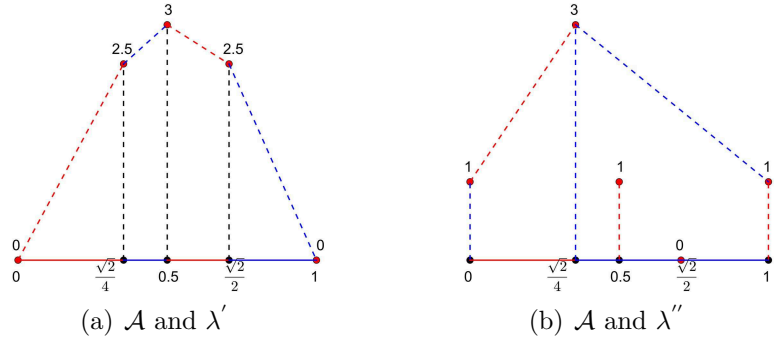

  \centering
  \subfigure[$\mathcal{A}$ and $\lambda^{'}$]{
    \label{fig306:subfig:a} 
    \includegraphics[width=5.1cm]{tu811.eps}}
  \hspace{0.5cm}
  \subfigure[$\mathcal{A}$ and $\lambda^{''}$]{
    \label{fig306:subfig:b} 
    \includegraphics[width=5.1cm]{tu812.eps}}
  \caption{Regular decompositions of $\mathcal{A}$}
  \label{fig:306}
\end{figure}
\end{example}

\item[(h)] {\bf{Variation diminishing (VD) property}}.
Let $d_{i}=a_{i}-a_{0}~(i=1,\cdots,n)$ for $\mathcal {A}=\{a_{0},a_{1},\cdots,a_{n}\}\subset \mathbb{R}$ with $a_{0}\leq a_{1}\leq \cdots\leq a_{n-1}\leq a_{n}$. If $d_{i}~(i=1,\cdots,n)$ are rational numbers, then $d_{i}$ can be expressed as $d_{i}=\frac{p_{i}}{q_{i}}(p_{i},q_{i}\in \mathbb{N})$. Let $q$ be the least common multiple of $q_{1},q_{2},\cdots,q_{n}$, namely, $q=[q_{1},q_{2},\cdots,q_{n}]$, then $q d_{i}\in \mathbb{N}\setminus \{0\}$. At this point, we have the following theorem.

\begin{theorem}\label{the:303}
If $d_{i}=a_{i}-a_{0}\in \mathbb{Q}~(i=1,2,\cdots,n)$, then the planar GT-B\'ezier curve $\mathbf{P}_{\mathcal {A},\omega,\mathcal{B}}(t)$ is variation diminishing, which means that the number of intersections of any straight line with the GT-B\'ezier curve $\mathbf{P}_{\mathcal {A},\omega,\mathcal{B}}(t)$ is no more than the number of intersections of the line with its control polygon.
\end{theorem}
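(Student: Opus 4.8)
The plan is to reduce the geometric variation-diminishing statement to a sign-change count for a single scalar function, and then to exploit the rationality hypothesis to convert that function into an ordinary polynomial to which Descartes' rule of signs applies. Fix an arbitrary line $L:\alpha x+\beta y+\gamma=0$ and write $\lambda_{i}=\alpha b^{x}_{a_{i}}+\beta b^{y}_{a_{i}}+\gamma$ for the scaled signed distance from the control point $\mathbf{b}_{a_{i}}=(b^{x}_{a_{i}},b^{y}_{a_{i}})$ to $L$. Using the partition of unity from Theorem~\ref{the:ogtbp}(b) to absorb the constant $\gamma$, the signed distance of the curve point $\mathbf{P}_{\mathcal{A},\omega,\mathcal{B}}(t)$ to $L$ equals $f(t)=\sum_{i=0}^{n}\lambda_{i}T_{a_{i}}(t)$. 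Thus the intersections of the curve with $L$ are exactly the zeros of $f$ on $\Delta_{\mathcal{A}}$, while the intersections of the control polygon with $L$ are counted by the number of strict sign changes $S^{-}(\lambda_{0},\dots,\lambda_{n})$ of the coefficient sequence (each polygon edge meets $L$ precisely when consecutive $\lambda_{i}$ differ in sign). So it suffices to prove $Z(f)\le S^{-}(\lambda_{0},\dots,\lambda_{n})$, where $Z$ counts zeros on $(a_{0},a_{n})$.

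First I would clear the common denominator. Since every $\omega_{a_{i}},c_{a_{i}}>0$ and $(t-a_{0})^{d_{i}}(a_{n}-t)^{d_{n}-d_{i}}>0$ on $(a_{0},a_{n})$, the common denominator $\sum_{j}\omega_{a_{j}}\beta_{a_{j}}(t)$ is strictly positive there, so $f$ and $N(t)=\sum_{i=0}^{n}\lambda_{i}\omega_{a_{i}}\beta_{a_{i}}(t)$ share the same sign and the same zeros. Taking $k_{0}=k_{1}=1$ gives $\beta_{a_{i}}(t)=c_{a_{i}}(t-a_{0})^{d_{i}}(a_{n}-t)^{d_{n}-d_{i}}$ with $d_{i}=a_{i}-a_{0}$ and $d_{n}=a_{n}-a_{0}$. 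Dividing $N$ by the positive factor $(a_{n}-t)^{d_{n}}$ and applying the monotone change of variable $w=(t-a_{0})/(a_{n}-t)$, which maps $(a_{0},a_{n})$ increasingly onto $(0,\infty)$, turns the zero set into that of $\sum_{i=0}^{n}\mu_{i}w^{d_{i}}$ with $\mu_{i}=\lambda_{i}\omega_{a_{i}}c_{a_{i}}$.

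Now the rationality hypothesis enters decisively. With $q=[q_{1},\dots,q_{n}]$ as in the paragraph preceding the theorem, each $m_{i}:=qd_{i}$ is a nonnegative integer and $0=m_{0}<m_{1}<\dots<m_{n}=qd_{n}$ (assuming distinct knots; coincident knots reduce the degree as in Theorem~\ref{the:301}). Substituting the further monotone bijection $w=z^{q}$ of $(0,\infty)$ onto itself replaces $w^{d_{i}}$ by $z^{m_{i}}$, so the zeros of $f$ on $(a_{0},a_{n})$ correspond one-to-one with the positive real roots of the genuine polynomial $P(z)=\sum_{i=0}^{n}\mu_{i}z^{m_{i}}$. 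Because $\omega_{a_{i}},c_{a_{i}}>0$ we have $\operatorname{sign}\mu_{i}=\operatorname{sign}\lambda_{i}$, hence the coefficient sign-change count of $P$ equals $S^{-}(\lambda_{0},\dots,\lambda_{n})$. Descartes' rule of signs then bounds the number of positive roots of $P$ (a fortiori the number of \emph{distinct} positive roots) by that count, yielding exactly $Z(f)\le S^{-}(\lambda_{0},\dots,\lambda_{n})$.

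The analytic core is thus elementary once rationality is invoked; the step I expect to require the most care is the degenerate bookkeeping. I would need the correct convention for counting sign changes when some $\lambda_{i}=0$ (a control point lying on $L$, or $L$ passing through the interpolated endpoints $\mathbf{b}_{a_{0}},\mathbf{b}_{a_{n}}$), and I must check that under this convention the number of polygon crossings still dominates $S^{-}$ and that boundary zeros of $f$ at $t=a_{0},a_{n}$ do not spoil the count. I would also remark, as an alternative route, that the same conclusion follows from the strict total positivity of the collocation matrix in Theorem~\ref{the:gtbtp} together with the normalization in Theorem~\ref{the:ogtbp}, via the classical fact that a normalized totally positive basis is variation diminishing; the Descartes reduction above is preferable here because it is self-contained and makes transparent why the rationality of the $d_{i}$ is the hypothesis that does the work.
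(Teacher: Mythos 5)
Your proposal is correct and follows essentially the same route as the paper's proof: reduce to counting the zeros of the scalar ``height above $L$'' function $\sum_{i}\lambda_{i}T_{a_{i}}(t)$, clear the strictly positive denominator, substitute $u=(t-a_{0})/(a_{n}-t)$ to map onto $(0,\infty)$, use the rationality of the $d_{i}$ and the least common multiple $q$ to obtain a genuine polynomial, and conclude with Descartes' rule of signs (the paper's ``Cartesian notation rule''). If anything you are more careful than the paper, which passes from $u^{a_{i}-a_{0}}$ to $u^{qd_{i}}$ without spelling out the monotone substitution $w=z^{q}$, and which does not discuss the degenerate bookkeeping (zero coefficients $\lambda_{i}$, endpoint zeros) that you rightly flag.
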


\begin{proof}
In order to prove this theorem, we need to use the Cartesian notation rule, which presents the upper bound of the number of the positive roots of the polynomial. For any polynomial $f(t)=m_{0}+m_{1}t+\cdots+m_{n}t^{n}$, if we write $Z_{t>0}[f(t)]$ to denote the number of positive roots of $f(t)$ and denote $V[m_{0},m_{1},\cdots,m_{n}]$ as the number of strict sign changes of polynomial coefficients, then
\begin{equation*}
  Z_{t>0}[m_{0}+m_{1}t+\cdots+m_{n}t^{n}]\leq V[m_{0},m_{1},\cdots,m_{n}].
\end{equation*}
Let $L$ denote any straight line, $C$ denote the planar GT-B\'ezier curve defined by $\mathcal {A}$, and write $I(C,L)$ to denote the number of times $L$ crosses $C$. Establish the Cartesian coordinate system with $L$ as the abscissa axis. Because GT-B\'ezier curve is geometric invariant, we can let $(x_{a_{i}},y_{a_{i}})(i=0,1,\cdots,n)$ represent the new coordinates of the control points. Let $P$ denote the control polygon and $I(P,L)$ denote the number of times $L$ crosses $P$. We only need to prove that $I(C,L)\leq I(P,L)$.

We set a parameter transformation as $u=\frac{t-a_{0}}{a_{n}-t}, t\in(a_{0},a_{n})$, so that $u\in(0,+\infty)$. Then by the Cartesian notation rule
\begin{eqnarray*}
  I(C,L) \!\!\!\!&=& \!\!\!\!Z_{a_{0}\leq t \leq a_{n}}\left[\sum_{i=0}^{n}y_{a_{i}}T_{a_{i}}(t)\right] \!=\! Z_{a_{0}\leq t \leq a_{n}}\left[\sum_{i=0}^{n}\frac{y_{a_{i}}\omega_{a_{i}}c_{a_{i}}(t-a_{0})^{t-a_{i}}(a_{n}-t)^{a_{n}-a_{i}}}{\sum_{i=0}^{n}\omega_{a_{i}}c_{a_{i}}(t-a_{0})^{t-a_{i}}(a_{n}-t)^{a_{n}-a_{i}}}\right]\\  &=&\!\!\!\!Z_{0< u < +\infty}\left[\sum_{i=0}^{n}\frac{y_{a_{i}}\omega_{a_{i}}c_{a_{i}}u^{a_{i}-a_{0}}}{\sum_{i=0}^{n}\omega_{a_{i}}c_{a_{i}}u^{a_{i}-a_{0}}}\right] \!=\! Z_{0< u < +\infty}\left[\sum_{i=0}^{n}\frac{y_{a_{i}}\omega_{a_{i}}c_{a_{i}}u^{q d_{i}}}{\sum_{i=0}^{n}\omega_{a_{i}}c_{a_{i}}u^{q d_{i}}}\right] \\
   &=& \!\!\!\!Z_{0< u < +\infty}\left[\sum_{i=0}^{n}\frac{y_{a_{i}}\omega_{a_{i}}c_{a_{i}}u^{q d_{i}}}{\sum_{i=0}^{n}\omega_{a_{i}}c_{a_{i}}u^{q d_{i}}}\right] \!=\! Z_{0< u < +\infty}\left[\sum_{i=0}^{n}y_{a_{i}}u^{q d_{i}}\right] \\
   &\leq&\!\!\!\!V\left[y_{a_{0}},y_{a_{1}},\cdots,y_{a_{n}}\right]\!=\! I(P,L),
\end{eqnarray*}
and this leads to end the proof.
\end{proof}

From Theorem~\ref{the:303}, we have the following property.

\item[(i)] {\bf{Convexity-preserving property}}. Suppose $d_{i}=a_{i}-a_{0}\in \mathbb{Q}~(i=1,\cdots,n)$, then the planar GT-B\'ezier curve is convex if its control polygon is convex.
\end{description}

\section{Generalized Toric-B\'ezier Surfaces}

\begin{definition}\label{def:401}
Let $\mathcal {A}=\{\mathbf{a}_{0},\mathbf{a}_{1},\cdots,\mathbf{a}_{n}\}\subset\mathbb{R}^{2}$ be a finite set of real points. Given positive weights $\omega=\{\omega_{a_{i}}\mid a_{i}\in\mathcal {A}\}$ and control points $\mathcal {B}=\{\mathbf{b}_{\mathbf{a}_{i}}\mid\mathbf{a}_{i}\in\mathcal {A}\}$, the generalized toric-B\'ezier surface (GT-B\'ezier surface for short) is defined as
\begin{equation}\label{equ:401}
   \mathbf{P}_{\mathcal {A},\omega,\mathcal{B}}(u,v)\!\!=\!\!\sum_{i\!=\!0}^{n}\!\mathbf{b}_{\mathbf{a}_{i}}T_{\mathbf{a}_{i}}(u,v) \!\!=\!\!\sum_{i\!=\!0}^{n}\!\mathbf{b}_{\mathbf{a}_{i}}\frac{\omega_{\mathbf{a}_{i}}\beta_{\mathbf{a}_{i}}(u,v)}{\sum^{n}_{i\!=\!0}\!\omega_{\mathbf{a}_{i}}\beta_{\mathbf{a}_{i}}(u,v)},~~~(u,v)\in\Delta_{\mathcal {A}}.
\end{equation}
\end{definition}

\begin{example}\label{exa:401}
Let $\mathcal {A}\!=\!\{(0,2),(1,2),(0,1),(1,1),(2,1),(0,0),(1,0),(2,0)\}$ be the integer points in the pentagon as shown in Fig.~\ref{fig1:subfig:a}, and set control points $\mathcal {B}\!=\!\{(0,2,0),(1,2,4),(0,\frac{6}{5},2),(\frac{8}{7},\frac{8}{7},5),$\\
$(2,1,2),(0,0,0),(\frac{6}{5},0,2),(2,0,0)\}$ and weights $\omega=\{2,2,5,7,2,3,5,2\}$. Suppose $c_{\mathbf{a}_{i}}=1~(i=0,\cdots,7)$, then we can define a toric surface as shown in Fig.~\ref{fig401:subfig:a}. This toric surface does not have linear precision, but we can tune it to achieve linear precision. We set $\widetilde{{\mathcal{A}}}=\{(0,2),(1,2),(0,\frac{6}{5}),(\frac{8}{7},\!\frac{8}{7}),\!(2,\!1),\!(0,\!0),\!(\frac{6}{5},\!0),\!(2,\!0)\}$ by moving the non-extreme points of $\mathcal {A}$ within the pentagon (Fig.~\ref{fig1:subfig:b}). The GT-B\'ezier surface constructed by $\widetilde{{\mathcal{A}}}$, $\omega$ and $\mathcal {B}$ has linear precision, as shown in Fig.~\ref{fig401:subfig:b}. The theoretical proof can be found in~\cite{ref19}.

\begin{figure}[h!]
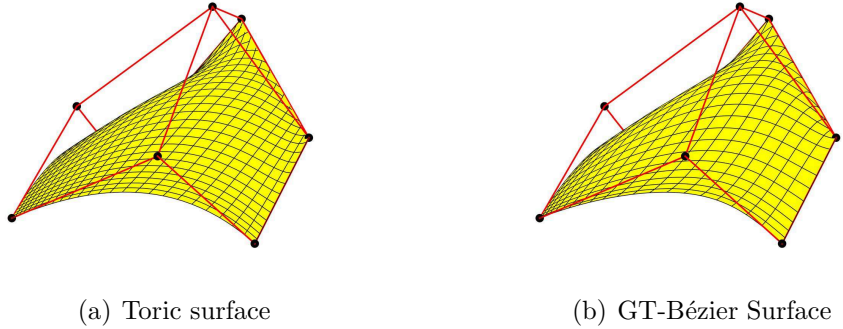

  \centering
  \subfigure[Toric surface]{
    \label{fig401:subfig:a} 
    \includegraphics[width=6cm]{tu10.eps}}
  \hspace{0.6cm}
  \subfigure[GT-B\'ezier Surface]{
    \label{fig401:subfig:b} 
    \includegraphics[width=6cm]{tu102.eps}}
  \caption{Toric Surface and GT-B\'ezier Surface}
  \label{fig:401}
\end{figure}

\end{example}

From the properties of the GT-Bernstein basis functions, we have the following properties of the GT-B\'ezier surface.

\begin{description}
  \item[(a)] {\bf{Affine invariance and convex hull property}}. Since the basis functions (\ref{equ:205}) possess of nonnegativity and partition of unity, the corresponding GT-B\'ezier surface (\ref{equ:401}) has affine invariance and convex hull property.
  \item[(b)] {\bf{Degeneration property}}. When $\mathcal {A}\!\!=\!\!\{\!\mathbf{a}_{0},\!\mathbf{a}_{1}\!,\!\cdots\!,\!\mathbf{a}_{n}\!\}\!\!\subset\!\!\mathbb{Z}^{2}$, the GT-B\'ezier surface associated of $\mathcal {A}$ degenerates to the toric surface defined in~\cite{ref12} by the property of basis(\ref{equ:205}). In particular, the rational B\'ezier triangle defined by $\mathcal {A}=\{(i,j)\in \mathbb{Z}^{2}\mid i+j\leq k,i\geq0,j\geq0 \}$, and the rational tensor product B\'ezier surface defined by $\mathcal {A}=\{(i,j)\in \mathbb{Z}^{2}\mid 0\leq i\leq m,0\leq j\geq n\}$ are special cases of the GT-B\'ezier surface.
  \item[(c)] {\bf{Corner points interpolation property}}. This property follows directly from the property at the corner points property of the basis (\ref{equ:205}), that is $\mathbf{P}_{\mathcal {A},\omega,\mathcal{B}}(V_{i})=\mathbf{b}_{V_{i}},i=1,\cdots,r$, where $V_{i}\in \mathcal {A}$ are the vertices of $\Delta_{\mathcal {A}}$.
  \item[(d)] {\bf{Isoparametric curves property}}. The isoparametric curves $\mathbf{P}_{\mathcal {A},\omega,\mathcal{B}}(u^{\ast},v)$ and $\mathbf{P}_{\mathcal {A},\omega,\mathcal{B}}(u,v^{\ast})$ of a GT-B\'ezier surface are respectively the GT-B\'ezier curves.

      \begin{theorem}\label{the:bop}
      Each boundary of the GT-B\'ezier surface is a GT-B\'ezier curve $\mathbf{P}_{\hat{\phi_{i}},\omega|_{\hat{\phi_{i}}},\mathcal{B}|_{\hat{\phi_{i}}}}$, which defined by control points $\mathbf{b}_{\mathbf{a}_{i}}$ and weights $\omega_{\mathbf{a}_{i}}$ by $\mathbf{a}_{i}\in \hat{\phi_{i}}$ of corresponding edges $\phi_{i}\subset\Delta_{\mathcal {A}}$, where $i=1,\cdots,r$.
      \end{theorem}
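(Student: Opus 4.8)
The plan is to fix an edge $\phi_j$ of $\Delta_{\mathcal{A}}$ (with endpoints the vertices $V_{j-1}$ and $V_j$), restrict the surface map to $\phi_j$, and identify the restriction with the univariate GT-B\'ezier curve built on $\hat{\phi_j}=\phi_j\cap\mathcal{A}$. First I would parametrize the edge affinely, writing each of its points as $(u(t),v(t))$ for $t$ in an interval $[s_{\min},s_{\max}]$, and assign to each knot $\mathbf{a}_i\in\hat{\phi_j}$ its one-dimensional coordinate $s_i$ along that line. The starting observation is the boundary property of Theorem~\ref{the:201}(c): along $\phi_j$ every basis function indexed by $\mathbf{a}_i\in\mathcal{A}\setminus\hat{\phi_j}$ vanishes, so both the numerator and the denominator of $\mathbf{P}_{\mathcal{A},\omega,\mathcal{B}}$ collapse to sums over $\hat{\phi_j}$ only. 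This reduces the problem to understanding the restricted basis functions $\beta_{\mathbf{a}_i}|_{\phi_j}$ for $\mathbf{a}_i\in\hat{\phi_j}$.

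The key algebraic step exploits collinearity. On the edge, $h_j\equiv 0$ so that factor is trivial, while every other linear form restricts to an affine function $h_l(u(t),v(t))=A_lt+B_l$; because the knots $\mathbf{a}_i$ are collinear, the exponent $h_l(\mathbf{a}_i)=A_ls_i+B_l$ is affine in $s_i$ with the very same coefficients. I would then split each factor as
$$(A_lt+B_l)^{A_ls_i+B_l}=(A_lt+B_l)^{B_l}\cdot\bigl[(A_lt+B_l)^{A_l}\bigr]^{s_i},$$
separating an $i$-independent part from a part of the form $(\text{base})^{s_i}$. Collecting over all $l\neq j$ yields $\beta_{\mathbf{a}_i}(t)=c_{\mathbf{a}_i}\,G(t)\,H(t)^{s_i}$, where $G(t)=\prod_{l\neq j}(A_lt+B_l)^{B_l}$ is common to every index and $H(t)=\prod_{l\neq j}(A_lt+B_l)^{A_l}$. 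Substituting into the reduced rational expression, the common factor $G(t)$ cancels, and after pulling out $H(t)^{s_{\min}}$ one is left with
$$\mathbf{P}_{\mathcal{A},\omega,\mathcal{B}}\big|_{\phi_j}=\frac{\sum_{\mathbf{a}_i\in\hat{\phi_j}}\omega_{\mathbf{a}_i}c_{\mathbf{a}_i}\mathbf{b}_{\mathbf{a}_i}\,H(t)^{\,s_i-s_{\min}}}{\sum_{\mathbf{a}_i\in\hat{\phi_j}}\omega_{\mathbf{a}_i}c_{\mathbf{a}_i}\,H(t)^{\,s_i-s_{\min}}}.$$

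Finally I would recognize this as the univariate GT-B\'ezier curve $\mathbf{P}_{\hat{\phi_j},\omega|_{\hat{\phi_j}},\mathcal{B}|_{\hat{\phi_j}}}$. Writing out the curve of Definition~\ref{def:301} on the knots $\{s_i\}$ with endpoints $s_{\min},s_{\max}$ and applying the substitution $w=(t-s_{\min})/(s_{\max}-t)$ used in the proof of Theorem~\ref{the:303} puts it in exactly the form $\sum\omega_ic_i\mathbf{b}_iw^{s_i-s_{\min}}\big/\sum\omega_ic_iw^{s_i-s_{\min}}$; so it suffices to check that $u=H(t)$ plays the role of $w$. Since $h_{j-1}$ vanishes only at $V_{j-1}$ and $h_{j+1}$ only at $V_j$ while every remaining form stays strictly positive on $\phi_j$, the factor $H(t)$ tends to $0$ as $t\to s_{\min}$ and to $+\infty$ as $t\to s_{\max}$, so by continuity its range is all of $(0,+\infty)$ and the two curves have the same image. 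The main obstacle is precisely this bookkeeping with the edges other than $\phi_j$: one must verify that their contributions factor cleanly into the common $G(t)$ and the single quantity $H(t)$ of full range $(0,+\infty)$, and that the surviving exponents $s_i-s_{\min}$ coincide with those of the genuine univariate basis -- both of which hinge on the collinearity of $\hat{\phi_j}$. Once this is settled, the argument applies verbatim to each $j=1,\dots,r$.
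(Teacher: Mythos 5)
Your proposal is correct and follows essentially the same route as the paper's proof: restrict to an edge using the boundary vanishing property, use collinearity of $\hat{\phi_{j}}$ to make every exponent $h_{l}(\mathbf{a}_{i})$ affine in the edge parameter so that each restricted basis function factors as a common function $G(t)$ times $H(t)^{s_{i}}$, cancel the common factor, and reparametrize to match the univariate GT-B\'ezier form (the paper does the same with arc length $l_{j}$ in place of your affine coordinate $s_{i}$ and the substitution $t=\sigma s/(1+s)$). The only substantive difference is at the final step: where you argue by the intermediate value theorem that $H$ maps onto $(0,+\infty)$ so the images agree, the paper additionally verifies that the reparametrization is monotonic (hence one-to-one) by computing $\frac{\mathrm{d}s}{\mathrm{d}\tau}>0$ --- a point your setup also delivers immediately, since $\frac{\mathrm{d}}{\mathrm{d}t}\log H(t)=\sum_{l\neq j}A_{l}^{2}/(A_{l}t+B_{l})>0$ on the interior of the edge.
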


      \begin{proof}
      Consider the restriction  $\mathbf{P}_{\hat{\phi},\omega|_{\hat{\phi}},\mathcal{B}|_{\hat{\phi}}}$ of the GT-B\'ezier surface at the fixed edge $\phi=\phi_{i}$ of $\Delta_{\mathcal {A}}$. Denote $V_{0}=(u_{0},v_{0})=V_{i-1}$, $V_{1}=(u_{1},v_{1})=V_{i}$, and $h_{i}(u,v)=h(u,v)=\xi u+\eta v+\rho$ is the equation of $\phi$ for simplicity. Let the angle between the edge $\phi$ and the $u$ axis be $\alpha$. Then $\tan\alpha=-\frac{\xi}{\eta}$, and
      \begin{equation*}
        \cos\alpha=\frac{\eta}{\sqrt{\xi^{2}+\eta^{2}}}.
      \end{equation*}
      Let $\sigma=|V_{0}V_{1}|=\sqrt{(u_{1}-u_{0})^{2}+(v_{1}-v_{0})^{2}}$.

      All basis functions $\beta_{\mathbf{a}_{i}}(u,v)$ with indices $\mathbf{a}_{i}\in\mathcal {A}\setminus\hat{\phi}$ vanishes if $(u,v)\in \phi$, hence $\mathbf{P}_{\hat{\phi},\omega|_{\hat{\phi}},\mathcal{B}|_{\hat{\phi}}}$ depends only on weights and control points indexed by $\mathbf{a}_{i}\in\hat{\phi}$. If $\mathbf{a}_{j}=(u_{j},v_{j})\in\hat{\phi}$, then $h(\mathbf{a}_{j})=\xi u_{j}+\eta v_{j}+\rho=0$, $v_{j}=-\frac{\rho+\xi u_{j}}{\eta}$. Let $l_{j}=|\mathbf{a}_{j}V_{0}|=\sqrt{(u_{j}-u_{0})^{2}+(v_{j}-v_{0})^{2}}$. By geometric relationship, we have
      \begin{equation*}
        u_{j}=u_{0}+l_{j}\cos\alpha
      \end{equation*}
      For the edge equation $h_{k}(u,v)=0$ for the edge $\phi_{k}$ of $\Delta_{\mathcal {A}}$, we evaluate $h_{k}(u,v)$ at point $\mathbf{a}_{j}$,
      \begin{eqnarray*}
       h_{k}(\mathbf{a}_{j})&\!\!=\!\!&  \xi_{k} u_{j}+\eta_{k} v_{j}+\rho_{k} \\
       &\!\! =\!\!&  \frac{\eta\xi_{k}-\xi\eta_{k}}{\eta}u_{j}+\rho_{k}-\frac{\eta_{k}}{\eta}\rho \\
       &\!\!=\!\!&\rho_{k}-\frac{\eta_{k}}{\eta}\rho+\frac{\eta\xi_{k}-\xi\eta_{k}}{\eta}u_{0}+\frac{(\eta\xi_{k}-\xi\eta_{k})\cos\alpha}{\eta}l_{j}.
      \end{eqnarray*}
      Thus the basis defined on the edge $\phi$ can be expressed as
      \begin{equation*}
          \beta_{\mathbf{a}_{i}}(u,v) =c_{\mathbf{a}_{i}}\prod_{k=1}^{r} h_{k}(u,v)^{(\rho_{k}-\frac{\eta_{k}}{\eta}\rho+\frac{\eta\xi_{k}-\xi\eta_{k}}{\eta}u_{0})}(h_{1}(u,v)^{\frac{(\eta\xi_{1}-\xi\eta_{1})\cos\alpha}{\eta}}\cdots h_{r}(u,v)^{\frac{(\eta\xi_{r}-\xi\eta_{r})\cos\alpha}{\eta}})^{l_{j}}.
      \end{equation*}
      Here the first $r$ factors $h_{k}(u,v)^{(\rho_{k}-\frac{\eta_{k}}{\eta}\rho+\frac{\eta\xi_{k}-\xi\eta_{k}}{\eta}u_{0})}$ do not depend on $j$ and can be canceled in the definition of GT-B\'ezier surface.

      When $(u,v)\in \phi$, then $h(u,v)=0$, $v=-\frac{\rho+\xi u}{\eta}$. So when $(u,v)\in \phi$, $h_{k}(u,v)$ is univariate function of $u$, written $h_{k}(u)$. If we set new variables
      \begin{equation*}
        s\!\!=\!\!h_{1}(u)^{\frac{(\eta\xi_{1}-\xi\eta_{1})\cos\alpha}{\eta}}\cdots h_{r}(u)^{\frac{(\eta\xi_{r}\!-\!\xi\eta_{r})\cos\alpha}{\eta}},~~~t=\frac{\sigma s}{1+s},
      \end{equation*}
      we obtain
         \begin{equation*}
           \mathbf{P}_{\hat{\phi},\omega|_{\hat{\phi}},\mathcal{B}|_{\hat{\phi}}}(u) = \frac{\sum_{\mathbf{a}_{j}\in\hat{\phi}}\omega_{\mathbf{a}_{j}}\mathbf{b}_{\mathbf{a}_{j}}c_{\mathbf{a}_{j}}s^{l_{j}}}{\sum_{\mathbf{a}_{j}\in\hat{\phi}}\omega_{\mathbf{a}_{j}}c_{\mathbf{a}_{j}}s^{l_{j}}} =\frac{\sum_{\mathbf{a}_{j}\in\hat{\phi}}\omega_{\mathbf{a}_{j}}\mathbf{b}_{\mathbf{a}_{j}}c_{\mathbf{a}_{j}}t^{l_{j}}(\sigma-t)^{\sigma-l_{j}}}{\sum_{\mathbf{a}_{j}\in\hat{\phi}}\omega_{\mathbf{a}_{j}}c_{\mathbf{a}_{j}}t^{l_{j}}(\sigma-t)^{\sigma-l_{j}}}.
         \end{equation*}
      We choose a natural parameter $\tau$ on the edge, $u=u_{0}+\tau\sigma\cos\alpha(0<\tau<1)$, to prove that this reparametrization is 1-1, and calculate derivatives
      \begin{eqnarray*}
        \frac{\mathrm{d}s}{\mathrm{d}\tau} &\!\!=\!\!& \frac{\mathrm{d}}{\mathrm{d}\tau}(h_{1}(u)^{\frac{(\eta\xi_{1}-\xi\eta_{1})\cos\alpha}{\eta}})\!\!\cdots h_{r}(u)^{\frac{(\eta\xi_{r}-\xi\eta_{r})\cos\alpha}{\eta}}\!+\!\\
         &&\cdots\!+\!h_{1}(u)^{\frac{(\eta\xi_{1}-\xi\eta_{1})\cos\alpha}{\eta}}\!\!\cdots \! \frac{\mathrm{d}}{\mathrm{d}\tau}(h_{r}(u)^{\frac{(\eta\xi_{r}-\xi\eta_{r})\cos\alpha}{\eta}}\!)\\
         &\!\!=\!\!& \sigma\!\prod_{k=1}^{r}\!h_{k}(u)^{\frac{(\eta\xi_{k}-\xi\eta_{k})\cos\alpha}{\eta}}\!\sum_{j=1}^{r}\frac{(\frac{\eta\xi_{j}-\xi\eta_{j}}{\eta}\cos\alpha)^{2}}{h_{j}(u)}\!>\!0,~~~0<\tau<1,
      \end{eqnarray*}
      and
      \begin{equation*}
        \frac{\mathrm{d}t}{\mathrm{d}\tau}=\frac{\mathrm{d}}{\mathrm{d}\tau}(\frac{\sigma s}{1+s})=\frac{\sigma}{(1+s)^{2}}\frac{\mathrm{d}s}{\mathrm{d}\tau}>0.
      \end{equation*}
      Hence the reparametrization $\tau\longmapsto t$ is monotonic. Also it is easy to check that it preserves endpoints. Therefore it is 1-1 and ends the proof.
      \end{proof}

   \item[(e)] {\bf{Multiple knot property}}. When a knot of $\mathcal {A}$ tends to its adjacent knot, the following theorem describes the limit property of the GT-B\'ezier surface, and demonstrates the construction of GT-B\'ezier surface by $\mathcal {A}$ with multiple knots.

      \begin{theorem}\label{the:401}
      Suppose $c_{\mathbf{a}_{i}}=1(i=0,\cdots,n)$. When the knot $\mathbf{a}_{k}~(0\leq k<n)$ approaches to $\mathbf{a}_{q}~(0\leq q<n, and~q\neq k )$ along line $\mathbf{a}_{k}\mathbf{a}_{q}$ with the convex hull $\Delta_{\mathcal{A}}$ unchanging, the limit of GT-B\'ezier surface $\mathbf{P}_{\mathcal {A},\omega,\mathcal{B}}(u,v)$ defined in (\ref{equ:401}) is exactly the GT-B\'ezier surface $\widetilde{\mathbf{P}}_{\widetilde{\mathcal {A}},\widetilde{\omega},\widetilde{\mathcal{B}}}(u,v)$, defined as
      \begin{equation}
         \lim_{\mathbf{a}_{k}\rightarrow \mathbf{a}_{q}}\mathbf{P}_{\mathcal {A},\omega,\mathcal{B}}(u,v)\!\! = \!\! \widetilde{\mathbf{P}}_{\widetilde{\mathcal {A}},\widetilde{\omega},\widetilde{\mathcal{B}}}(u,v)\!\!=\!\!\frac{\sum_{i\neq k,q}\omega_{\mathbf{a}_{i}}\mathbf{b}_{\mathbf{a}_{i}}\beta_{\mathbf{a}_{i}}(u,v)+\widetilde{\omega}_{\mathbf{a}_{q}}\widetilde{\mathbf{b}}_{\mathbf{a}_{q}}\beta_{\mathbf{a}_{q}}(u,v)}{\sum_{i\neq k,q}\omega_{\mathbf{a}_{i}}\beta_{\mathbf{a}_{i}}(u,v)+\widetilde{\omega}_{\mathbf{a}_{q}}\beta_{\mathbf{a}_{q}}(u,v)},
      \end{equation}
      where $\widetilde{\omega}_{\mathbf{a}_{q}}\!=\!\omega_{\mathbf{a}_{k}}\!+\omega_{\mathbf{a}_{q}},\widetilde{\mathbf{b}}_{\mathbf{a}_{q}}\!\!=\!\frac{\omega_{\mathbf{a}_{k}}}{\omega_{\mathbf{a}_{k}}\!\!+\omega_{\mathbf{a}_{q}}}\mathbf{b}_{\mathbf{a}_{k}}\!\!+\frac{\omega_{\mathbf{a}_{q}}}{\omega_{\mathbf{a}_{k}}+\omega_{\mathbf{a}_{q}}}\mathbf{b}_{\mathbf{a}_{q}}$, $\widetilde{\mathcal {A}}=\{\mathbf{a}_{0},\cdots,\mathbf{a}_{k-1},\mathbf{a}_{k+1},\cdots,\mathbf{a}_{n}\}, \widetilde{\mathcal{B}}=\{\mathbf{b}_{\mathbf{a}_{0}},\cdots,\mathbf{b}_{\mathbf{a}_{k-1}},\mathbf{b}_{\mathbf{a}_{k+1}},\cdots,\widetilde{\mathbf{b}}_{\mathbf{a}_{q}},\cdots,\mathbf{b}_{\mathbf{a}_{n}}\}$ and ~$\widetilde{\omega}\!=\!\{\omega_{\mathbf{a}_{0}},\cdots,\omega_{\mathbf{a}_{k-1}},\omega_{\mathbf{a}_{k+1}},\cdots,\widetilde{\omega}_{\mathbf{a}_{q}},\cdots,\omega_{\mathbf{a}_{n}}\}$.
      \end{theorem}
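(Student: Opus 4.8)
The plan is to mirror the argument used for Theorem \ref{the:301}, now exploiting the fact that the bivariate basis function $\beta_{\mathbf{a}_i}(u,v)=c_{\mathbf{a}_i}\prod_{j=1}^{r}h_j(u,v)^{h_j(\mathbf{a}_i)}$ depends on the index point $\mathbf{a}_i$ only through the exponents $h_1(\mathbf{a}_i),\dots,h_r(\mathbf{a}_i)$, i.e.\ through the values of the fixed affine edge functions at $\mathbf{a}_i$. The structural observation I would record first is that, because $\mathbf{a}_k$ is dragged toward $\mathbf{a}_q$ along the segment $\mathbf{a}_k\mathbf{a}_q$ while $\Delta_{\mathcal{A}}$ is held fixed, the edge functions $h_1,\dots,h_r$ themselves never change during the deformation. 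Consequently every basis function $\beta_{\mathbf{a}_i}$ with $i\neq k$ is completely independent of the position of $\mathbf{a}_k$, so that only the single function $\beta_{\mathbf{a}_k}$ moves and must be tracked.

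Next I would compute the limit of that one moving function. Since each $h_j$ is affine, hence continuous, $h_j(\mathbf{a}_k)\to h_j(\mathbf{a}_q)$ as $\mathbf{a}_k\to\mathbf{a}_q$, and since by the standing assumption $c_{\mathbf{a}_i}\equiv 1$ the coefficient plays no role, I obtain for each fixed $(u,v)\in\Delta_{\mathcal{A}}$
\begin{equation*}
\lim_{\mathbf{a}_k\to\mathbf{a}_q}\beta_{\mathbf{a}_k}(u,v)=\prod_{j=1}^{r}h_j(u,v)^{h_j(\mathbf{a}_q)}=\beta_{\mathbf{a}_q}(u,v),
\end{equation*}
which is the exact two-dimensional analogue of the identity $\beta_{a_k}\to\beta_{a_{k+1}}$ in Theorem \ref{the:301}. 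Substituting this convergence into the rational form (\ref{equ:401}) and collecting the two terms that now carry the identical factor $\beta_{\mathbf{a}_q}(u,v)$, the numerator contribution becomes $(\omega_{\mathbf{a}_k}\mathbf{b}_{\mathbf{a}_k}+\omega_{\mathbf{a}_q}\mathbf{b}_{\mathbf{a}_q})\beta_{\mathbf{a}_q}(u,v)$ and the denominator contribution $(\omega_{\mathbf{a}_k}+\omega_{\mathbf{a}_q})\beta_{\mathbf{a}_q}(u,v)$; reading off $\widetilde{\omega}_{\mathbf{a}_q}=\omega_{\mathbf{a}_k}+\omega_{\mathbf{a}_q}$ and $\widetilde{\mathbf{b}}_{\mathbf{a}_q}=(\omega_{\mathbf{a}_k}\mathbf{b}_{\mathbf{a}_k}+\omega_{\mathbf{a}_q}\mathbf{b}_{\mathbf{a}_q})/\widetilde{\omega}_{\mathbf{a}_q}$ then yields precisely the stated surface $\widetilde{\mathbf{P}}_{\widetilde{\mathcal{A}},\widetilde{\omega},\widetilde{\mathcal{B}}}(u,v)$.

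The step I expect to require genuine care is the passage to the limit of the quotient, namely justifying that the limit of (\ref{equ:401}) equals the quotient of the limits of its numerator and denominator. Here I would note that both numerator and denominator are finite sums of termwise-convergent summands, and that the limiting denominator $\sum_{i\neq k}\omega_{\mathbf{a}_i}\beta_{\mathbf{a}_i}(u,v)+\widetilde{\omega}_{\mathbf{a}_q}\beta_{\mathbf{a}_q}(u,v)$ is strictly positive at interior points of $\Delta_{\mathcal{A}}$ and remains nonzero on the boundary by the boundary and corner-point properties established in Theorem \ref{the:201}; hence the algebra of limits applies pointwise. The hypothesis that $\Delta_{\mathcal{A}}$ is unchanged is exactly what keeps the fixed edge functions $h_j$ valid throughout and prevents the base of any power from degenerating, and it is the only place where the geometric assumption is essential. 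The remaining manipulations being routine, this completes the proof along the lines of Theorem \ref{the:301}.
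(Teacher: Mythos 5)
Your proposal is correct and follows essentially the same route as the paper's own proof: establish $\lim_{\mathbf{a}_k\to\mathbf{a}_q}\beta_{\mathbf{a}_k}(u,v)=\beta_{\mathbf{a}_q}(u,v)$ from continuity of the fixed edge functions, then merge the two terms sharing $\beta_{\mathbf{a}_q}$ to read off $\widetilde{\omega}_{\mathbf{a}_q}$ and $\widetilde{\mathbf{b}}_{\mathbf{a}_q}$. Your added remarks on the positivity of the limiting denominator and on why the unchanged convex hull keeps the $h_j$ fixed are sound points that the paper leaves implicit.
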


  \begin{proof}
      When $\mathbf{a}_{k}$ tends to $\mathbf{a}_{q}$, we have
      \begin{eqnarray*}
        \lim_{\mathbf{a}_{k}\rightarrow \mathbf{a}_{q}}\beta_{\mathbf{a}_{k}}(u,v) &\!\!=\!\!& \!\lim_{\mathbf{a}_{k}\rightarrow \mathbf{a}_{q}}h_{1}(u,v)^{h_{1}(\mathbf{a}_{k})} \cdots h_{r}(u,v)^{h_{r}(\mathbf{a}_{k})} \\
         &\!\!=\!\!& h_{1}(u,v)^{h_{1}(\mathbf{a}_{q})} \cdots h_{r}(u,v)^{h_{r}(\mathbf{a}_{q})} \\
         &\!\!=\!\!& \beta_{\mathbf{a}_{q}}(u,v).
      \end{eqnarray*}
      Thus,
      \begin{eqnarray*}
         \lim_{\mathbf{a}_{k}\rightarrow  \mathbf{a}_{q}}\sum_{i=0}^{n}\mathbf{b}_{\mathbf{a}_{i}}T_{\mathbf{a}_{i}}(u,v) &\!\!= \!\!& \widetilde{\mathbf{P}}_{\widetilde{\mathcal {A}},\widetilde{\omega},\widetilde{\mathcal{B}}}(u,v)\\
         &\!\!\!=\!\!\!& \lim_{\mathbf{a}_{k}\rightarrow  \mathbf{a}_{q}}\sum_{i=0}^{n}\frac{\mathbf{b}_{\mathbf{a}_{i}}\omega_{\mathbf{a}_{i}}\beta_{\mathbf{a}_{i}}(u,v)}{\sum^{n}_{i=0}\omega_{\mathbf{a}_{i}}\beta_{\mathbf{a}_{i}}(u,v)}\\
         &\!\!=\!\!& \frac{\sum_{i\neq k,q}\omega_{\mathbf{a}_{i}}\mathbf{b}_{\mathbf{a}_{i}}\beta_{\mathbf{a}_{i}}\!(u,\!v)\!\!+\!\!\omega_{\mathbf{a}_{k}}\mathbf{b}_{\mathbf{a}_{k}}\beta_{\mathbf{a}_{q}}\!(u,\!v)\!\!+\!\!\omega_{\mathbf{a}_{q}}\!\mathbf{b}_{\mathbf{a}_{q}}\beta_{\mathbf{a}_{q}}\!(u,\!v)}{\sum_{i\neq k,q}\omega_{\mathbf{a}_{i}}\beta_{\mathbf{a}_{i}}(u,v)\!\!+\!\!\omega_{\mathbf{a}_{k}}\beta_{\mathbf{a}_{q}}(u,v)\!\!+\!\!\omega_{\mathbf{a}_{q}}\beta_{\mathbf{a}_{q}}(u,v)}\\
         &\!\!=\!\!& \frac{\sum_{i\neq k,q}\omega_{\mathbf{a}_{i}}\mathbf{b}_{\mathbf{a}_{i}}\beta_{\mathbf{a}_{i}}(u,v)\!+\!(\omega_{\mathbf{a}_{k}}\mathbf{b}_{\mathbf{a}_{k}}\!+\!\omega_{\mathbf{a}_{q}}b_{\mathbf{a}_{q}})\beta_{\mathbf{a}_{q}}(u,v)}{\sum_{i\neq k,q}\omega_{\mathbf{a}_{i}}\beta_{\mathbf{a}_{i}}(u,v)\!+\!(\omega_{\mathbf{a}_{k}}+\omega_{\mathbf{a}_{q}})\beta_{\mathbf{a}_{q}}(u,v)}.
      \end{eqnarray*}
      Let $\widetilde{\omega}_{\mathbf{a}_{q}}=\omega_{\mathbf{a}_{k}}+\omega_{\mathbf{a}_{q}}$, $\widetilde{\mathbf{b}}_{\mathbf{a}_{q}}=\frac{\omega_{\mathbf{a}_{k}}}{\omega_{\mathbf{a}_{k}}+\omega_{\mathbf{a}_{q}}}\mathbf{b}_{\mathbf{a}_{k}}+\frac{\omega_{\mathbf{a}_{q}}}{\omega_{\mathbf{a}_{k}}+\omega_{\mathbf{a}_{q}}}\mathbf{b}_{\mathbf{a}_{q}}$,
      we can obtain
    \begin{equation*}
       \lim_{\mathbf{a}_{k}\rightarrow \mathbf{a}_{q}}\mathbf{P}_{\mathcal {A},\omega,\mathcal{B}}(u,v)\!\! = \!\! \widetilde{\mathbf{P}}_{\widetilde{\mathcal {A}},\widetilde{\omega},\widetilde{\mathcal{B}}}(u,v) \!\!=\!\!\frac{\sum_{i\neq k,q}\omega_{\mathbf{a}_{i}}\mathbf{b}_{\mathbf{a}_{i}}\beta_{\mathbf{a}_{i}}(u,v)+\widetilde{\omega}_{\mathbf{a}_{q}}\widetilde{\mathbf{b}}_{\mathbf{a}_{q}}\beta_{\mathbf{a}_{q}}(u,v)}{\sum_{i\neq k,q}\omega_{\mathbf{a}_{i}}\beta_{\mathbf{a}_{i}}(u,v)+\widetilde{\omega}_{\mathbf{a}_{q}}\beta_{\mathbf{a}_{q}}(u,v)},
    \end{equation*}
  where $\widetilde{\mathcal {A}}=\{\mathbf{a}_{0},\cdots,\mathbf{a}_{k-1},\mathbf{a}_{k+1},\cdots,\mathbf{a}_{n}\}$, $\widetilde{\mathcal{B}}=\{\mathbf{b}_{\mathbf{a}_{0}},\cdots,\mathbf{b}_{\mathbf{a}_{k-1}},\mathbf{b}_{\mathbf{a}_{k+1}},\cdots,\widetilde{\mathbf{b}}_{\mathbf{a}_{q}},\cdots,\mathbf{b}_{\mathbf{a}_{n}}\}$ and ~$\widetilde{\omega}=\{\omega_{\mathbf{a}_{0}},\cdots,\omega_{\mathbf{a}_{k-1}},\omega_{\mathbf{a}_{k+1}},\cdots,\widetilde{\omega}_{\mathbf{a}_{q}},\cdots,\omega_{\mathbf{a}_{n}}\}$.
  \end{proof}

  \begin{example}\label{exa:402}
  Consider the GT-B\'ezier surface defined in Example \ref{exa:401}. Let $\mathcal{A}\!=\!\{(0,2),(1,2)$\\
  $,(0,\frac{6}{5}),(\frac{8}{7},\frac{8}{7}),(2,1),(0,0),(\frac{6}{5},0),(2,0)\}$, control points $\mathcal {B}\!=\!\{(0,2,0),(1,2,4),(0,\frac{6}{5},2),(\frac{8}{7},\frac{8}{7},5)$\\
  $,(2,1,2),(0,0,0),(\frac{6}{5},0,2),(2,0,0)\}$, weights $\omega=\{2,2,5,7,2,3,5,2\}$ and $c_{\mathbf{a}_{i}}=1~(i=0,\cdots,7)$. If $\mathbf{a}_{3}=(\frac{8}{7},\frac{8}{7})$ approaches $\mathbf{a}_{1}=(1,2)$, then the changes of the GT-B\'ezier surface are shown in Fig.~\ref{fig:403}.

  Since the shape of the convex hull $\Delta_{\mathcal{A}}$ and control points $\mathcal {B}$ are unchanging during the process of $\mathbf{a}_{3}$ tending to $\mathbf{a}_{1}$, the original curved surface is stretched like an elastic film by the boundary property of the GT-B\'ezier surface. Until $\mathbf{a}_{3}=\mathbf{a}_{1}$, the resulting surface is defined by $\widetilde{\mathcal{A}}=\{(0,2),(1,2),(0,\frac{6}{5}),(2,1),(0,0),(\frac{6}{5},0),(2,0)\!\}$, control points $\widetilde{\mathcal {B}}\!\!=\!\!\{(0,2,0),(\frac{10}{9},\frac{12}{9},\frac{43}{9}),\!(0,\frac{6}{5},2),\!(2,1,2),\!(0,0,0),\!(\frac{6}{5},0,2),\!(2,0,0)\!\}$, weights $\widetilde{\omega}=\{2,9,5,2,3,5,2\}$.

  \begin{figure}[h!]
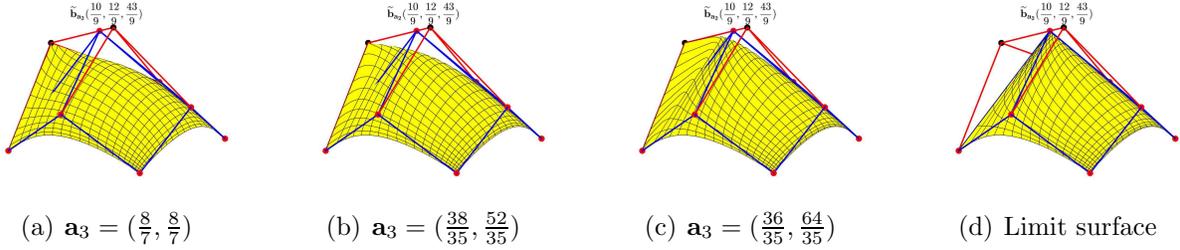

  \centering
  \subfigure[$\mathbf{a}_{3}=(\frac{8}{7},\frac{8}{7})$]{
    \includegraphics[width=3.7cm]{tuc131.eps}}
  \hspace{0.1cm}
  \subfigure[$\mathbf{a}_{3}=(\frac{38}{35},\frac{52}{35})$]{
    \includegraphics[width=3.7cm]{tuc132.eps}}
 \hspace{0.1cm}
  \subfigure[$\mathbf{a}_{3}=(\frac{36}{35},\frac{64}{35})$]{
    \includegraphics[width=3.7cm]{tuc133.eps}}
  \hspace{0.1cm}
  \subfigure[Limit surface]{
    \includegraphics[width=3.7cm]{tuc134.eps}}
  \caption{Limit of GT-B\'ezier surface with $\mathbf{a}_{3}\rightarrow  \mathbf{a}_{1}$}
  \label{fig:403}
\end{figure}
\end{example}

Theorem~\ref{the:401} shows that the limit surface of the GT-B\'ezier surface is defined by the $\widetilde{\mathcal {A}}=\{\mathbf{a}_{0},\cdots,\mathbf{a}_{k-1},\mathbf{a}_{k+1},\cdots,\mathbf{a}_{n}\}$, $\widetilde{\mathcal{B}}=\{\mathbf{b}_{\mathbf{a_{0}}},\cdots,\mathbf{b}_{\mathbf{a}_{k-1}},\mathbf{b}_{\mathbf{a}_{k+1}},\cdots,\widetilde{\mathbf{b}}_{\mathbf{a}_{q}},\cdots,\mathbf{b}_{\mathbf{a}_{n}}\}$ and ~$\widetilde{\omega}=\{\omega_{\mathbf{a_{0}}},\cdots,\omega_{\mathbf{a}_{k-1}},\omega_{\mathbf{a}_{k+1}},\cdots,\widetilde{\omega}_{\mathbf{a}_{q}},\cdots,\omega_{\mathbf{a}_{n}}\}$  when knot $\mathbf{a}_{k}~(0\leq k<n)$ approaches to $\mathbf{a}_{q}~(0\leq q<n, and~q\neq k )$  without the convex hull $\Delta_{\mathcal{A}}$ changing. When multiple knots appear without changing the convex hull $\Delta_{\mathcal{A}}$, we only need to treat it by Theorem~\ref{the:401} repeatedly.

 \item[(f)]  {\bf{Toric degeneration property}}. Similarly, let $\lambda :\mathcal{A}\rightarrow \mathbb{R}$ be a lifting function to lift the points $\mathbf{a}_{i}$ of $\mathcal{A}$ to $(\mathbf{a}_{i},\lambda(\mathbf{a}_{i}))\in \mathbb{R}^{3}$. We denote $P_{\lambda}=conv\{(\mathbf{a}_{i},\lambda(\mathbf{a}_{i}))\mid \mathbf{a}_{i}\in \mathcal{A}\}$ the convex hull of the lifted points. Each face of the convex hull $P_{\lambda}$ has a normal vector pointing to the outer side. We call it the upper face of $P_{\lambda}$ if the last coordinate of the normal vector is positive. If we project these upper faces back vertically into $\mathbb{R}^{2}$, they can cover $\Delta_{\mathcal{A}}$ and form a regular subdivision $\Gamma_{\lambda}$ of $\Delta_{\mathcal{A}}$ induced by $\lambda$ (see \cite{ref13}).

  We group together the points of $\mathcal{A}$ that are in the same subset of the $\Gamma_{\lambda}$ and on the same upper face of the $P_{\lambda}$. Then we get a decomposition of $\mathcal{A}$, which is called regular decomposition $\mathcal{S}_{\lambda}$ of $\mathcal{A}$ induced by $\lambda$. For each subset $\mathcal{F}$ of $\mathcal{S}_{\lambda}$, we can use the weights $\omega|_{\mathcal{F}}=\{\omega_{\mathbf{a}_{i}}\mid \mathbf{a}_{i}\in\mathcal{F}\}$ and the control points $\mathcal{B}|_{\mathcal{F}}=\{\mathbf{b}_{\mathbf{a}_{i}}\mid \mathbf{a}_{i}\in\mathcal{F}\}$ to define a new GT-B\'ezier surface $\mathbf{P}_{\mathcal{F},\omega|_{\mathcal{F}},\mathcal{B}|_{\mathcal{F}}}$ on $\Delta_{\mathcal{F}}=conv\{\mathbf{a}_{i}\in\mathcal{F}\}$ by Definition \ref{def:401}. The union of these patches
  \begin{equation*}
    \mathbf{P}_{\mathcal {A},\omega,\mathcal{B}}(\mathcal{S}_{\lambda})=\bigcup_{\mathcal{F}\in \mathcal{S}_{\lambda}}\mathbf{P}_{\mathcal{F},\omega|_{\mathcal{F}},\mathcal{B}|_{\mathcal{F}}}
  \end{equation*}
   is called the regular control surface of $\mathbf{P}_{\mathcal {A},\omega,\mathcal{B}}$ induced by regular decomposition $\mathcal{S}_{\lambda}$.

   We can use lifting function $\lambda$ to get a set of weights with a parameter $x$, $\omega_{\lambda}(x):=\{x^{\lambda(\mathbf{a}_{i})}\omega_{\mathbf{a}_{i}}\mid \mathbf{a}_{i}\in\mathcal{A}\}$. These weights are used to define the map
  \begin{equation}\label{equ:403}
    \mathbf{P}_{\mathcal {A},\omega_{\lambda}(x),\mathcal{B}}(u,v) \!\!=\!\! \frac{\sum_{i=0}^{n}x^{\lambda(\mathbf{a}_{i})}\omega_{\mathbf{a}_{i}}\mathbf{b}_{\mathbf{a}_{i}}\beta_{\mathbf{a}_{i}}(u,v)}{\sum^{n}_{i=0}x^{{\lambda(\mathbf{a}_{i})}}\omega_{\mathbf{a}_{i}}\beta_{\mathbf{a}_{i}}(u,v)},~~~~~(u,v)\in \Delta_{\mathcal{A}} .
  \end{equation}
  The image of $\Delta_{\mathcal{A}}$ under this map is a GT-B\'ezier surface with a parameter $x$, denoted as $\mathbf{P}_{\mathcal {A},\omega_{\lambda}(x),\mathcal{B}}$.
  We have the following result.

  \begin{theorem}\label{the:402}
  The limit of the GT-B\'ezier surface $\mathbf{P}_{\!\mathcal {A},\omega_{\lambda}(x),\mathcal{B}}$ as $x\rightarrow \infty$ is the regular control surface induced by the regular decomposition $\mathcal{S}_{\lambda}$, that is
  \begin{equation*}
    \lim_{x\rightarrow \infty}\mathbf{P}_{\mathcal {A},\omega_{\lambda}(x),\mathcal{B}}=\mathbf{P}_{\mathcal {A},\omega,\mathcal{B}}(\mathcal{S}_{\lambda}).
  \end{equation*}
  \end{theorem}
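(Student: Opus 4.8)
The plan is to follow, mutatis mutandis, the argument used for the curve case in the proof of Theorem~\ref{the:302}, now carried out over the two-dimensional polytope $\Delta_{\mathcal{A}}$ and invoking the degeneration theory for real irrational toric varieties of Postinghel et al.~\cite{ref16}. First I would recall that the GT-B\'ezier surface $\mathbf{P}_{\mathcal {A},\omega,\mathcal{B}}$ is realized as the image of the composition
\begin{equation*}
  \mathcal {A}\stackrel{\{\beta_{\mathbf{a}_{i}}\mid \mathbf{a}_{i}\in\mathcal {A}\}}{\longrightarrow}X_{\mathcal {A}}\stackrel{\omega}{\longrightarrow}X_{\mathcal {A},\omega}\stackrel{\mathcal{B}}{\longrightarrow}\mathbf{P}_{\mathcal {A},\omega,\mathcal{B}},
\end{equation*}
where $X_{\mathcal{A}}$ is the real projective toric variety attached to $\mathcal{A}$, the arrow labelled $\omega$ is the translation by the weights, and the final arrow is the continuous projection determined by the control points $\mathcal{B}$. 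The only structural change from the univariate setting is that $\mathcal{S}_{\lambda}$ is now indexed by the upper \emph{faces} of the lifted polytope $P_{\lambda}\subset\mathbb{R}^{3}$ rather than by its upper edges.

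Next I would replace $\omega$ by the one-parameter family $\omega_{\lambda}(x)=\{x^{\lambda(\mathbf{a}_{i})}\omega_{\mathbf{a}_{i}}\mid\mathbf{a}_{i}\in\mathcal{A}\}$, which yields a family of translated toric varieties $X_{\mathcal {A},\omega_{\lambda}(x)}$ fitting into
\begin{equation*}
  \mathcal {A}\stackrel{\{\beta_{\mathbf{a}_{i}}\mid \mathbf{a}_{i}\in\mathcal {A}\}}{\longrightarrow}X_{\mathcal {A}}\stackrel{\omega_{\lambda}(x)}{\longrightarrow}X_{\mathcal {A},\omega_{\lambda}(x)}.
\end{equation*}
The central step is the limit, taken in the Hausdorff distance,
\begin{equation*}
  \lim_{x\rightarrow \infty}X_{\mathcal {A},\omega_{\lambda}(x)}=\bigcup_{\mathcal{F}\in \mathcal{S}_{\lambda}}X_{\mathcal{F},\omega|_{\mathcal{F}}},
\end{equation*}
which is exactly the degeneration result of~\cite{ref16}: as $x\to\infty$ the family converges to the union of the irrational toric varieties $X_{\mathcal{F},\omega|_{\mathcal{F}}}$ indexed by the cells $\mathcal{F}$ of the regular decomposition $\mathcal{S}_{\lambda}$ induced by $\lambda$. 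Since the projection determined by $\mathcal{B}$ is a fixed continuous map, it commutes with the Hausdorff limit, so applying it to both sides gives
\begin{equation*}
  \bigcup_{\mathcal{F}\in \mathcal{S}_{\lambda}}X_{\mathcal{F},\omega|_{\mathcal{F}}}\stackrel{\mathcal{B}}{\longrightarrow}\bigcup_{\mathcal{F}\in \mathcal{S}_{\lambda}}\mathbf{P}_{\mathcal{F},\omega|_{\mathcal{F}},\mathcal{B}|_{\mathcal{F}}}=\mathbf{P}_{\mathcal {A},\omega,\mathcal{B}}(\mathcal{S}_{\lambda}),
\end{equation*}
which is the claimed regular control surface.

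The main obstacle I anticipate is twofold: justifying that the Hausdorff limit may be exchanged with the projection $\mathcal{B}$, and verifying that the bivariate regular decomposition $\mathcal{S}_{\lambda}$ defined combinatorially above (via the upper faces of $P_{\lambda}$ and the subdivision $\Gamma_{\lambda}$) coincides with the decomposition that governs the toric degeneration in~\cite{ref16}. In the curve case a single one-parameter normal direction traced the upper edges, whereas here one must confirm from the normal-fan description of $P_{\lambda}$ that its upper faces produce precisely the cells $\mathcal{F}$ appearing in the limit, and that each restricted piece $X_{\mathcal{F},\omega|_{\mathcal{F}}}$ projects under $\mathcal{B}|_{\mathcal{F}}$ to the GT-B\'ezier surface $\mathbf{P}_{\mathcal{F},\omega|_{\mathcal{F}},\mathcal{B}|_{\mathcal{F}}}$ on $\Delta_{\mathcal{F}}$ according to Definition~\ref{def:401}. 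These identifications are inherited from the regular-subdivision framework of~\cite{ref13,ref16}; once they are in place, the proof concludes exactly as in the curve case.
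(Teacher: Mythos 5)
Your proposal is correct and is essentially the paper's own argument: the paper proves Theorem~\ref{the:402} by declaring it ``similar to Theorem~\ref{the:302}'' and omitting the details, and what you have written is precisely that univariate proof transported to the bivariate setting --- the same chain of maps through $X_{\mathcal{A}}$ and $X_{\mathcal{A},\omega_{\lambda}(x)}$, the same Hausdorff-limit degeneration from~\cite{ref16} over the cells of $\mathcal{S}_{\lambda}$, and the same final projection by $\mathcal{B}$. Your closing remarks on commuting the projection with the limit and on matching the upper-face decomposition to the one in~\cite{ref16} are sensible points the paper glosses over, but they do not change the route.
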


  \begin{proof}
  The proof of the theorem is similar to Theorem \ref{the:302} and will be omitted here.
  \end{proof}

  Theorem~\ref{the:402} describes the conclusion that the limit surface of the GT-B\'ezier surface is its regular control surface, and explains the geometric meaning of the limit surface of the GT-B\'ezier surface when all the weights tend to infinity. And this property is called toric degeneration of GT-B\'ezier surfaces.

  \begin{example}\label{exa:404}
  Given point set $\widetilde{\mathcal{A}}$ is shown in Fig.~\ref{fig1:subfig:b}, and the lifted values of $\widetilde{\mathcal {A}}$ by $\lambda$ are shown in Figure Fig.~\ref{fig404:subfig:b}. The upper hull and the subdivision of $\Delta_{\widetilde{\mathcal{A}}}$ by $\lambda$ are shown in Fig.~\ref{fig404:subfig:c}, and the regular decomposition $\mathcal{S}_{\lambda}$ is shown in Fig.~\ref{fig404:subfig:d}.

  Let control points $\mathcal {B}\!\!=\!\!\{\!(0,\!2,\!0),\!(1,\!2,\!4),\!(0,\frac{6}{5},\!2),\!(\frac{8}{7},\!\frac{8}{7},\!18),(2,1,2),\!(0,0,0),(\frac{6}{5},0,2),(2,0,0)\!\}$ and weights $\omega=\{2,2,5,7,2,3,5,2\}$ corresponding to $\widetilde{\mathcal{A}}$. The toric degeneration process of this GT-B\'ezier surface is shown in Fig.~\ref{fig:405}. This figure also shows the GT-B\'ezier surfaces for the parameters $x=5$, $x=100$ and $x=600$ respectively. As the parameter $x$ becomes larger, the GT-B\'ezier surface approaches its regular control surface in Fig.~\ref{fig405:subfig:d} (consists of surface patches defined by three triangles and two quadrilaterals).
  \begin{figure}[h!]
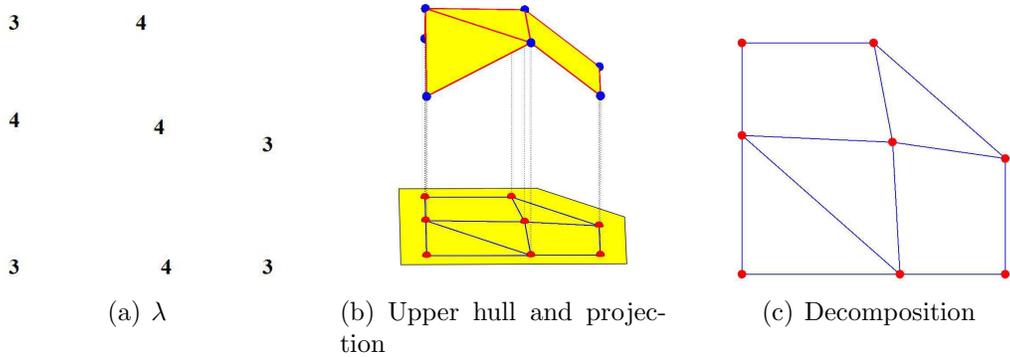

  \centering
  \subfigure[$\lambda$]{
    \label{fig404:subfig:b} 
    \includegraphics[width=3.7cm]{tu141.eps}}
   \hspace{0.5cm}
  \subfigure[Upper hull and projection]{
    \label{fig404:subfig:c} 
    \includegraphics[width=4.2cm]{tu143.eps}}
   \hspace{0.5cm}
  \subfigure[Decomposition]{
    \label{fig404:subfig:d} 
    \includegraphics[width=3.7cm]{tu142.eps}}
  \caption{Regular decomposition }
  \label{fig:404}
  \end{figure}
  \begin{figure}[h!]
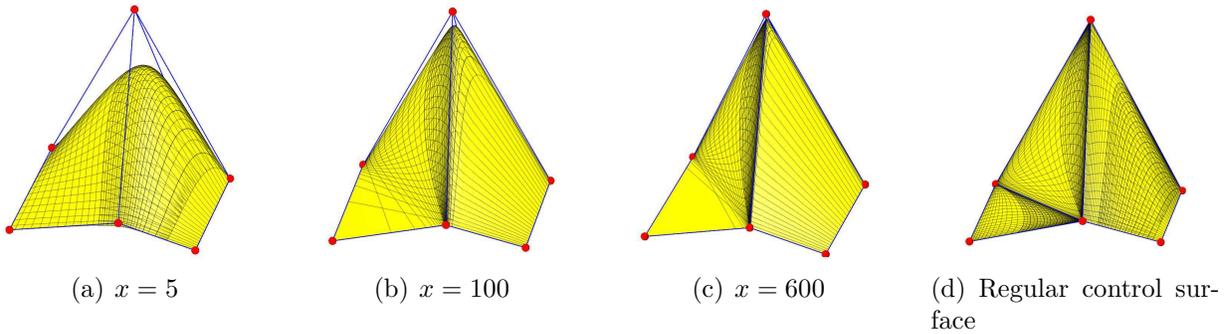

  \centering
  \subfigure[$x=5$]{
    \label{fig405:subfig:a} 
    \includegraphics[width=3.7cm]{tu151.eps}}
  \hspace{0.1cm}
  \subfigure[$x=100$]{
    \label{fig405:subfig:b} 
    \includegraphics[width=3.7cm]{tu152.eps}}
   \hspace{0.1cm}
  \subfigure[$x=600$]{
    \label{fig405:subfig:c} 
    \includegraphics[width=3.7cm]{tu153.eps}}
\hspace{0.1cm}
  \subfigure[Regular control surface]{
    \label{fig405:subfig:d} 
    \includegraphics[width=3.7cm]{tu154.eps}}
  \caption{Toric degeneration of GT-B\'ezier surface}
  \label{fig:405}
  \end{figure}
  \end{example}
 \end{description}
\section{Conclusion and future work}
\label{sec:4}
In this paper, we define a new kind of blending functions, called GT-Bernstein Basis Functions associated with a real number set. And then, we define a new kind of parametric curve and multisided surface based on the GT-Bernstein basis functions, which are the generalizations of the classical rational B\'ezier curves and surfaces, and toric surface patches. We indicate that the GT-B\'ezier curve and surface we presented partially preserve the properties of rational B\'ezier curves and surfaces. Finally, we also present the limiting properties of weights and knots.

Our further work will be devoted to elevation algorithm and de Casteljau algorithm of GT-B\'ezier curves and surfaces. In addition, the basis defined by the general real number knots limits the application range of the curve and surface. At present, only the MQ radial basis functions are investigated deeply in theories and applications, and their degrees are only in rational form. In this paper, we present the definition and study the properties of curves and surfaces theoretically only. How to apply the curves and surfaces for related subjects is our work in future too.
\section*{Acknowledgements}
This work is partly supported by the National Natural Science Foundation of China (Nos. 11671068, 11801053).

\section*{References}
\bibliographystyle{elsarticle-num}

\end{document}